\newtheorem{remark}{Remark}
\newtheorem{assumption}{Assumption}
\newtheorem{theorem}{Theorem }
\definecolor{shadecolor}{rgb}{1,0.8,0.3}
\newcommand\BibTeX{{\rmfamily B\kern-.05em \textsc{i\kern-.025em b}\kern-.08em
T\kern-.1667em\lower.7ex\hbox{E}\kern-.125emX}}
\begin{document}

\runninghead{Kayacan}

\title{Sliding Mode Learning Control of Uncertain Nonlinear Systems with Lyapunov Stability Analysis}

\author{Erkan Kayacan}

\affiliation{Senseable City Laboratory and Computer Science \& Artificial Intelligence Laboratory, Massachusetts Institute of Technology}

\corrauth{Erkan Kayacan, Senseable City Laboratory and Computer Science \& Artificial Intelligence Laboratory, \\
Massachusetts Institute of Technology, MA  02139, USA.}
 
\email{erkank@mit.edu}

\begin{abstract}

This paper addresses to Sliding Mode Learning Control (SMLC) of uncertain nonlinear systems with Lyapunov stability analysis. In the control scheme, a conventional control term is used to provide the system stability in compact space while a Type-2 Neuro-Fuzzy Controller (T2NFC)  learns system behavior so that the T2NFC takes the overall control of the system completely in a very short time period. The stability of the sliding mode learning algorithm was proven in literature; however, it is so restrictive for systems without the overall system stability. To address this shortcoming, a novel control structure with a novel sliding surface is proposed in this paper and the stability of the overall system is proven for nth-order uncertain nonlinear systems. To investigate the capability and effectiveness of the proposed learning and control algorithms, the simulation studies have been achieved under noisy conditions. The simulation results confirm that the developed SMLC algorithm can learn the system behavior in the absence of any mathematical model knowledge and exhibit robust control performance against external disturbances.

\end{abstract}

\keywords{Adaptive control, neuro-fuzzy control, uncertain nonlinear systems, online learning algorithm, sliding mode control theory.}

\maketitle

\section{Introduction}

In modern control engineering, control of uncertain nonlinear systems is one of the most crucial topics \cite{Chen2016, KAYACAN2012863, 6606388, KAYACAN2017276, KAYACAN2016265, Erkanasjc}. Robust controllers intend to ensure the best control performance in the presence of the uncertainties, and high controller gain is the common method to overcome uncertainity problem in nonlinear control theory \cite{7323849}. However, these methods bring about large control actions, and demand of very powerful actuators \cite{7762162}. Moreover, the robust control performance against uncertainties is mostly obtained at a price of sacrificing the nominal control performance of the system \cite{7574310}. Therefore, a control method is required to exhibit robust control performance in the presence of uncertainties while either maintaining or improving the nominal control performance \cite{Huang2015}. As a model-free method, fuzzy logic control is an alternative solution to the model-based control approaches in the presence of uncertainties and has been applied to different real-time systems, e.g., mobile robots \cite{Castillo201219}, spherical rolling robot \cite {erkan2013} and agricultural ground vehicles \cite{erdal2015t2fnn, Kayacan2018chapter}. Even though type-1 fuzzy logic controllers are the most well known and widely used types of fuzzy logic controllers, researchers have recently focused on type-2 fuzzy logic controllers, which can only deal with high levels of uncertainty in real-time applications \cite{Maldonado2013496, castillo2014, Shing2015, MUHURI201771, Liu2017, Solis2015, 7083756}.

As a learning controller, feedback-error learning control structure was proposed in \cite{Gomi} and firstly implemented for the control of robot manipulators that was relied on the parallel structure of a conventional controller and a neuro-fuzzy based controller. The generated control signal by the conventional controller is benefited as a learning error signal to train learning algorithms. The latter learns the system behavior online and thus eliminates the former from the control of the system \cite{830951,Ruan2007770}. Since the gradient-based methods as training algorithm are complex and computationally more expensive, sliding-mode control (SMC) theory-based learning algorithms were proposed for feedback-error learning control structures \cite{Efe2000}. They can ensure faster convergence rather than the conventional learning techniques for online adjustment in neural networks, type-1 and type-2 neuro-fuzzy structures \cite{erkan2015identt2fnn}. Moreover, the stability and robustness of NFCs based on SMC theory-based learning algorithms can also be examined \cite{4084704}. The novelty of SMC theory-based learning algorithms is that the parameters are adjusted by the designed algorithm in order to fulfill a stable equation by imposing the error instead of minimizing an error function. On the other hand, SMC is inherently robust to uncertainties \cite{YIN2017282,7855666}. 

Despite their practicality and easiness to design, model-free control methods -- such as fuzzy control and neuro-fuzzy control -- are mostly criticized by model-based control community due to the fact that there is generally no convincing analytical proof of the system stability \cite{1707754}. These criticisms are reasonable since trial-and-error methods cannot be tolerated for sophisticated and expensive robotic systems, such as unmanned air vehicles and spacecraft systems. Therefore, analytical stability analysis is an overriding requirement so that researchers tend to design model-based controllers \cite{Mayne2000789, KAYACAN20141, 7525615, LAM2018390}. In order to accomplish this limitation of model-free controllers, the overall system stability proven in this paper lightens the major disadvantage of model-free control methods. In the previous studies in literature, since the overall system stability of the sliding mode learning algorithm could not be proven, the stability analysis was accomplished, and a proportional-derivative controller was used to ensure the stability in compact space \cite{ACS982}. Recently, it has been shown that if the system is a second-order system, it is possible to guarantee the stability of the overall system by adding a robust term to the control scheme \cite{7027198}. However, this stability analysis is too restrictive.

The contributions of this paper are as follows.
\begin{enumerate}
\item The stability of the overall system is proven for an nth-order uncertain nonlinear system. The overall system stability analysis removes the most significant shortcoming of model-free control methods.
\item A novel sliding surface is proposed to train the sliding mode learning algorithm for the type-2 neuro-fuzzy control (T2NFC) algorithm.
\item The learning rate of the online sliding mode learning algorithm of T2NFC and the controller gain in the conventional control term are adaptive. Therefore, it is possible to control the system without foreknowing the upper bounds of the system states and their derivatives. 
\item The developed algorithm adjusts the proportion of the lower and upper MFs in the T2NFC that allows us to handle non-uniform uncertainties in T2FLSs.
\end{enumerate}

The paper consists of four sections: The SMLC structure consisting of the T2NFC and the conventional control term, and the sliding mode control theory-based learning algorithm are given in Section \ref{section_controlstructure}. The overall system stability is given in Section \ref{sec_overalstability}. The simulation results are presented to discuss control performance and noise analysis in Section \ref{section_simulation}. Finally, some conclusions are drawn from this study in Section \ref{section_conc}.

\section{Sliding Model Learning Control Structure}\label{section_controlstructure}

\subsection{Control Scheme}
\begin{figure}[b!]
  \centering
  \includegraphics[width=4.0in]{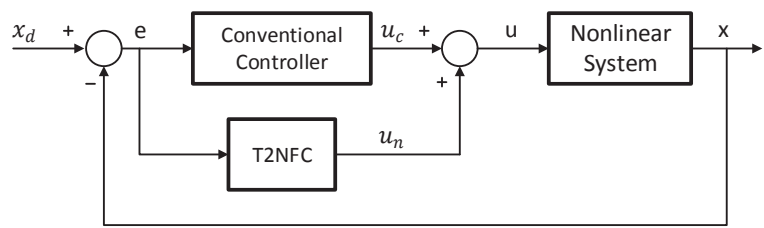}\\
  \caption{Block diagram of the control scheme}\label{fig_controlscheme}
\end{figure}

The whole control structure consists of a conventional controller and a type-2 neuro-fuzzy controller (T2NFC) as illustrated in Fig. \ref{fig_controlscheme}. The T2NFC controls the system by learning the system behavior owing to the fact its parameters are updated by a sliding mode learning algorithm. The conventional control action is added to guarantee the stability of overall system in compact space while T2NFC learns the system behavior. In this control structure, the purpose is that the output of the conventional controller converges to zero in finite time while T2NFC takes the overall control signal. The total control action applied to the system $u$ is determined as follows:
\begin{equation} \label{eq_totalcontrolaction}
u = u_{c} + u_{n}
\end{equation}
where $u_{c}$ and $u_{n}$ denote respectively the conventional control signal and the control signal generated by the T2NFC. The conventional control law is formulated as follows:
\begin{equation}\label{eq_robustterm}
u_{c}  =  k  \textrm{sgn}\left(s \right)
\end{equation}
where $s$ and $k$ denote the sliding surface and the controller gain, respectively. The controller gain $k$ is positive, i.e., $k >0$, adaptive and updated with the following rule
\begin{equation}\label{eq_robustterm_k}
\dot{k}  =  \frac{\gamma_{k} \mid s \mid } {2}
\end{equation}
where $\gamma_{k}$ is the coefficient of the adaption for the controller gain and positive, i.e., $\gamma_{k}>0$.

\subsection{Type-2 Neuro-Fuzzy Controller}
An interval type-2 Takagi-Sugeno-Kang fuzzy \emph{if-then} rule $R_{ij}$ is employed in this study \cite{Wenchuan2017,Zheng2018}. The premise parts are type-2 fuzzy functions while the consequent parts are crisp numbers. The rule $R_{ij}$ is defined as below:
\begin{equation}\label{Rlinearfunction}
R_{ij}: \;\; \textrm{If} \; e \; \textrm{is} \;\; \widetilde{1}_i \;\; \textrm{and} \; \dot{e} \; \textrm{is} \;\; \widetilde{2}_j, \;\; \textrm{then} \; f_{ij}=d_{ij}
\end{equation}
where $e$ and $\dot{e}$ denote respectively the error and the error rate while $\widetilde{1}_i$ and $\widetilde{2}_j$ denote respectively type-2 fuzzy sets for the input 1 and input 2. The function $f_{ij}$ is the output of the rules and the total number of rules is equal to $K=I \times J$ in which $I$ and $J$ denote respectively the numbers of membership functions used for the input 1 and input 2. The detailed of the proposed T2NFS is shown in Fig. \ref{fig_t2nfs}.
\begin{figure*}[b!]
  \centering
  \includegraphics[width=5.0in]{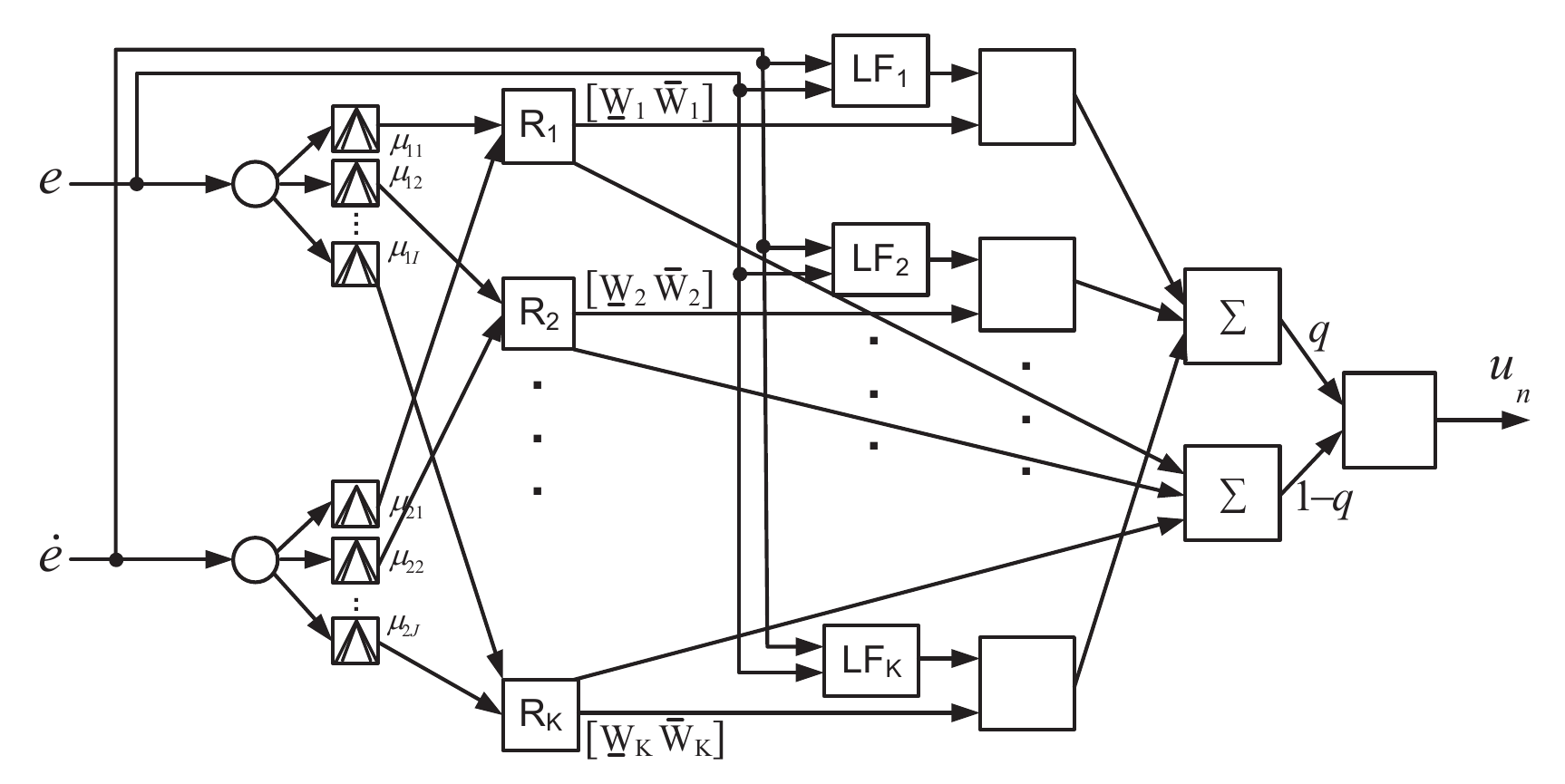}\\
  \caption{Structure of the proposed T2NFS for two inputs}\label{fig_t2nfs}
\end{figure*}

The lower and upper Gaussian MFs for T2FLSs are written as follows:
\begin{equation}\label{mu1_lower}
\underline{\mu}_{1i}(e) = \exp\Bigg(-\bigg(\frac{e-\underline{c}_{1i}}{\underline{\sigma}_{1i}}\bigg)^2\Bigg)
\end{equation}
\begin{equation}\label{mu1_upper}
\overline{\mu}_{1i}(e) = \exp\Bigg(-\bigg(\frac{e-\overline{c}_{1i}}{\overline{\sigma}_{1i}}\bigg)^2\Bigg)
\end{equation}
\begin{equation}\label{mu2_lower}
\underline{\mu}_{2j}(\dot{e}) = \exp\Bigg(-\bigg(\frac{\dot{e}-\underline{c}_{2j}}{\underline{\sigma}_{2j}}\bigg)^2\Bigg)
\end{equation}
\begin{equation}\label{mu2_upper}
\overline{\mu}_{2j}(\dot{e}) = \exp\Bigg(-\bigg(\frac{\dot{e}-\overline{c}_{2j}}{\overline{\sigma}_{2j}}\bigg)^2\Bigg)
\end{equation}
where $\underline{c}$ and $\overline{c}$ denote respectively the lower and upper means the membership functions, and $\underline{\sigma}$ and $ \overline{\sigma}$ the lower and upper standard deviations of the  membership functions. They are adjusted for the T2NFC method throughout the simulations.

The lower and upper memberships $\underline{\mu }$ and $\overline{\mu }$ of fuzzy system are written for each input signal being fed to the system. The firing strengths of rules are formulated as follows:
\begin{equation}\label{wij_lower_upper}
\underline{w}_{ij} = \underline{\mu}_{1i}(e)  \underline{\mu}_{2j}(\dot{e}) \quad \textrm{and} \quad
\overline{w}_{ij} = \overline{\mu}_{1i}(e) \overline{\mu}_{2j}(\dot{e})
\end{equation}

The control signal generated by the network is formulated below \cite{Biglarbegian2010}:
\begin{equation}\label{un}
u_n=q \sum_{i=1}^{I}\sum_{j=1}^{J}f_{ij}\widetilde{\underline{w}}_{ij}+(1-q)\sum_{i=1}^{I}\sum_{j=1}^{J}f_{ij}\widetilde{\overline{w}}_{ij}
\end{equation}
where $\widetilde{\underline{{w}}}_{ij}$ and $\widetilde{\overline{{w}}}_{ij}$ are the normalized firing strengths of the lower and upper output signals of the neuron $ij$ and written as follows:
\begin{eqnarray}\label{wij_lower_upper_normalized}
\widetilde{\underline{w}}_{ij} = \frac{\underline{w}_{ij}}{\sum_{i=1}^{I}\sum_{j=1}^{J}\underline{w}_{ij}} \quad\textrm{and} \quad
\widetilde{\overline{w}}_{ij} = \frac{\overline{w}_{ij}}{\sum_{i=1}^{I}\sum_{j=1}^{J}\overline{w}_{ij}}
\end{eqnarray}
The parameter $q$ determines the contributions of the lower and upper firing levels to the control signal and the adaptation rule for the parameter $q$ is given Section \ref{sec_smla}.

\subsection{Sliding Mode Learning Algorithm}\label{sec_smla}
The sliding surface utilized the fundamental basis of SMC theory \cite{Utkin} is written as follows:
\begin{equation}\label{eq_s}
s = \Big(\frac{d }{d t} + \lambda \Big)^{n-1} e
\end{equation}
where $\lambda$, $e$ and $n$ denote respectively the slope of the sliding surface, the system error and the order of the system to be controlled. The slope of the sliding surface is a strictly positive constant, i.e., $\lambda>0$. The error is the difference between a desired setpoint $x_{d}$ and a measured state $x$, i.e., $e= x_{d} - x$. The sliding surface is utilized to train the sliding mode learning algorithm. The T2NFC parameters are updated with the following equations:
\begin{equation} \label{c_1i_lower}
\dot{\underline{c}}_{1i} = \dot{e} + (e - \underline{c}_{1i}) \alpha \textrm{sgn}\left(s \right)
\end{equation}
\begin{equation} \label{c_1i_upper}
\dot{\overline{c}}_{1i} = \dot{e} + (e - \overline{c}_{1i}) \alpha \textrm{sgn}\left(s \right)
\end{equation}
\begin{equation} \label{c_2j_lower}
\dot{\underline{c}}_{2j} = \ddot{e} + (\dot{e} - \underline{c}_{2j}) \alpha \textrm{sgn}\left(s \right)
\end{equation}
\begin{equation} \label{c_2j_upper}
\dot{\overline{c}}_{2j} = \ddot{e} + (\dot{e} - \overline{c}_{2j}) \alpha \textrm{sgn}\left(s \right)
\end{equation}
\begin{equation}\label{sigma_1i_lower}
\dot{\underline{\sigma}}_{1i} = - \underline{\sigma}_{1i}  \bigg( 1+ \Big(\frac{ \underline{\sigma}_{1i} }{e - \underline{c}_{1i}} \Big)^{2} \bigg) \alpha  \textrm{sgn}\left(s \right)
\end{equation}
\begin{equation}\label{sigma_1i_upper}
\dot{\overline{\sigma}}_{1i} = - \overline{\sigma}_{1i}  \bigg( 1 + \Big( \frac{ \overline{\sigma}_{1i} }{ e - \overline{c}_{1i} } \Big)^{2} \bigg) \alpha  \textrm{sgn}\left(s \right)
\end{equation}
\begin{equation}\label{sigma_2j_lower}
\dot{\underline{\sigma}}_{2j} = - \underline{\sigma}_{2j}  \bigg( 1 + \Big( \frac{ \underline{\sigma}_{2j} }{ \dot{e} - \underline{c}_{2j}} \Big)^{2} \bigg) \alpha  \textrm{sgn}\left(s \right)
\end{equation}
\begin{equation}\label{sigma_2j_upper}
\dot{\overline{\sigma}}_{2j} = - \overline{\sigma}_{2j} \bigg( 1 + \Big( \frac{ \overline{\sigma}_{2j} }{ \dot{e}  - \overline{c}_{2j} } \Big)^{2} \bigg) \alpha  \textrm{sgn}\left(s \right)
\end{equation}
\begin{equation}\label{f_ij}
\dot{f}_{ij} =\frac{q \widetilde{\underline{w}}_{ij}+ (1-q)\widetilde{\overline{w}}_{ij}}{(q\widetilde{\underline{W}}+(1-q) \widetilde{\overline{W}})^T(q\widetilde{\underline{W}}+ (1-q)\widetilde{\overline{W}})}\alpha   \textrm{sgn}\left(s \right)
\end{equation}
\begin{equation}\label{q}
\dot{q} =\frac{1}{F(\widetilde{\underline{W}}-\widetilde{\overline{W}})^{T}}\alpha   \textrm{sgn}\left(s \right)
\end{equation}
\begin{equation}\label{alpha}
\dot{\alpha} =\gamma_{\alpha} \mid s \mid
\end{equation}
where $\alpha$ is the learning rate while $\gamma_{\alpha}$ is the coefficient to update the learning rate, and they are positive, i.e., $\alpha, \gamma_{\alpha}>0$. Normalized firing strengths vectors $\widetilde{\underline{W}}(t)$ and $\widetilde{\overline{W}}(t) $,  and the consequent part vector $F$ are defined as follows:
\begin{eqnarray}
\widetilde{\underline{W}}&=& [\widetilde{\underline{w}}_{11} \; \widetilde{\underline{w}}_{12} \dots \widetilde{\underline{w}}_{21} \dots \widetilde{\underline{w}}_{ij} \dots \widetilde{\underline{w}}_{IJ}]^{T} \nonumber \\
\widetilde{\overline{W}} &=& [\widetilde{\overline{w}}_{11} \; \widetilde{\overline{w}}_{12} \dots \widetilde{\overline{w}}_{21} \dots \widetilde{\overline{w}}_{ij} \dots \widetilde{\overline{w}}_{IJ}]^{T} \nonumber \\
F &=& [f_{11} \; f_{12} \dots f_{21} \dots f_{ij} \dots f_{IJ}] \nonumber
\end{eqnarray}
where $0<\widetilde{\underline{w}}_{ij} \leq 1$ and $0<\widetilde{\overline{w}}_{ij} \leq 1$. Moreover $\sum_{i=1}^{I}\sum_{j=1}^{J}\widetilde{\underline{w}}_{ij} = 1$ and $\sum _{i=1}^{I}\sum_{j=1}^{J}\widetilde{\overline{w}}_{ij} = 1$.

\begin{assumption}\label{ass_BuBdotu}
The input signal and its time derivative are assumed to be bounded in a compact set:
\begin{equation}\label{eq_BuBdotu}
\mid u \mid < B_{u}, \qquad \mid \dot{u} \mid < B_{\dot{u}}
\end{equation}
where $B_{u}$ and $B_{\dot{u}}$ are considered as positive constants.
\end{assumption}

\begin{theorem}[Stability of sliding mode learning algorithm]\label{theorem1}
If adaptations rules are proposed as in \eqref{c_1i_lower}-\eqref{alpha}, the final value of the learning rate $\alpha^{*}$ is large enough, i.e., $\alpha^{*} > B_{\dot{u}}$, and the term, which is the product of the controller gain $k$ and its final value $k^*$, is larger than the absolute value of the sliding surface $\mid s \mid$, i.e., $ k^* k > \mid s \mid$, this ensures that the output of the conventional controller $u_{c}$ will converge to zero in finite time.  
\end{theorem}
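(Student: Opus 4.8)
The plan is to reduce the claim to a finite-time reaching statement for a Lyapunov function built on the conventional control. I would take $V=\tfrac12 u_c^2$ as the candidate and exploit the decomposition $u=u_c+u_n$ from \eqref{eq_totalcontrolaction}, which gives $\dot u_c=\dot u-\dot u_n$. The whole argument then hinges on first evaluating $\dot u_n$ in closed form directly from the update laws, and only afterwards bounding $\dot u$ through Assumption~\ref{ass_BuBdotu}.

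First I would establish the key identity $\dot u_n=\alpha\,\textrm{sgn}(s)$. Differentiating the Gaussian memberships \eqref{mu1_lower}--\eqref{mu2_upper} in time and substituting the center laws \eqref{c_1i_lower}--\eqref{c_2j_upper} together with the width laws \eqref{sigma_1i_lower}--\eqref{sigma_2j_upper}, the feedforward $\dot e$ and $\ddot e$ contributions cancel and each membership satisfies a relation of the form $\dot{\underline\mu}_{1i}=-2\underline\mu_{1i}\,\alpha\,\textrm{sgn}(s)$, and likewise for the remaining memberships. Consequently every firing strength in \eqref{wij_lower_upper} decays at the same relative rate, so the normalized firing strengths \eqref{wij_lower_upper_normalized} are time-invariant, i.e. $\dot{\widetilde{\underline W}}=\dot{\widetilde{\overline W}}=0$. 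Substituting this, the consequent law \eqref{f_ij}, and the law \eqref{q} for $q$ into the time derivative of \eqref{un} collapses $\dot u_n$ to $\alpha\,\textrm{sgn}(s)$, because the normalization denominators in \eqref{f_ij} and \eqref{q} were chosen precisely to cancel the quadratic forms that otherwise appear. This computation, rather than the Lyapunov step, is the technical core.

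With the identity in hand, I would write $\dot V=u_c\dot u_c=u_c\dot u-u_c\,\alpha\,\textrm{sgn}(s)$. Since $u_c=k\,\textrm{sgn}(s)$ we have $u_c\,\textrm{sgn}(s)=k=|u_c|$, whence
\[
\dot V = u_c\dot u-\alpha|u_c|\le |u_c|\,|\dot u|-\alpha|u_c|<|u_c|\bigl(B_{\dot u}-\alpha\bigr),
\]
using Assumption~\ref{ass_BuBdotu}. Because $\dot\alpha=\gamma_\alpha|s|\ge 0$, the learning rate is nondecreasing and attains $\alpha^*>B_{\dot u}$, so once $\alpha>B_{\dot u}$ the bound yields $\dot V<0$; the hypothesis $k^*k>|s|$ is what keeps the conventional action dominant over the sliding variable throughout the transient, confining the trajectory and allowing the strict inequality to be converted into a finite-time estimate. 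Rewriting the bound as $\tfrac{d}{dt}\sqrt{2V}<B_{\dot u}-\alpha<0$ shows $\sqrt V$ decreases at a rate bounded away from zero, so $V$, and hence $u_c$, reaches zero in finite time.

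The delicate point I expect to fight with is the interaction between the discontinuity of $\textrm{sgn}(s)$ and the claim that $u_c\to 0$: naively $u_c^2=k^2$ with $k$ nondecreasing, so the convergence must be read in the equivalent (filtered) sense, with the learning signal $u_n$ absorbing the equivalent control while $s$ reaches the surface in finite time. Making the differentiation $\dot u_c=\dot u-\dot u_n$ legitimate across switching, and justifying the cancellations in $\dot u_n$ at points where a denominator such as $e-\underline c_{1i}$ could vanish, are the steps that will need the most care.
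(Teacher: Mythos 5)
Your proposal is correct in substance and reaches the result by a route that differs from the paper's only in its Lyapunov bookkeeping; the technical core you identify --- extracting a closed-form $\dot{u}_n$ from the adaptation laws --- is exactly the computation the paper carries out in its appendix. Your observation that the feedforward $\dot{e},\ddot{e}$ terms cancel, that each membership obeys $\dot{\underline{\mu}}=-2\alpha\,\textrm{sgn}(s)\,\underline{\mu}$ (this is \eqref{up8}), and that consequently the normalized firing strengths are time-invariant, is precisely the cancellation in \eqref{dotwij_lower_normalized}--\eqref{dotVc3}. You do drop a factor of two at the end: with $\dot{\widetilde{\underline{W}}}=\dot{\widetilde{\overline{W}}}=0$, the consequent law \eqref{f_ij} contributes $\alpha\,\textrm{sgn}(s)$ and the $q$-law \eqref{q} contributes a second $\alpha\,\textrm{sgn}(s)$, so the paper's \eqref{dotVc4} reads $\dot{u}_n=2\alpha\,\textrm{sgn}(s)$, not $\alpha\,\textrm{sgn}(s)$; this is harmless, since the correct factor only makes your bound more negative. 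Where you genuinely diverge is the Lyapunov function: the paper augments $u_c^2$ with the learning-rate adaptation error, taking $V=\frac{1}{2k^*}u_c^2+\frac{(k^*)^2}{2\gamma_\alpha}\big(\frac{\alpha}{(k^*)^2}-\alpha^*\big)^2$ as in \eqref{eq_smc_Lyapunov}, so that $\dot{\alpha}=\gamma_\alpha|s|$ injects a cross term $\alpha|s|/(k^*)^2$ into $\dot{V}$, and it is precisely to dominate that term in \eqref{eq_smc_Lyapunov_d3} that the hypothesis $k^*k>|s|$ is invoked. Your bare $V=\tfrac12 u_c^2$ combined with monotonicity of $\alpha$ (nondecreasing by \eqref{alpha}, with limit $\alpha^*>B_{\dot{u}}$, so $\alpha>B_{\dot{u}}$ after finite time, during which $|\dot{u}_c|\le B_{\dot{u}}+2\alpha$ rules out escape) has no such cross term, and as a consequence $k^*k>|s|$ is never actually used in your argument --- your remark that it ``keeps the conventional action dominant'' is decoration, not a step; you should either delete it or state that your route needs only $\alpha^*>B_{\dot{u}}$. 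In exchange, your route buys something the paper's does not: the comparison $\frac{d}{dt}\sqrt{2V}<B_{\dot{u}}-\alpha\le-\delta<0$ is an honest finite-time estimate, whereas the paper stops at $\dot{V}<0$, which by itself yields only monotone decrease. Your closing caveat is also well placed and is a defect inherited from the paper rather than introduced by you: since $u_c=k\,\textrm{sgn}(s)$ with $k$ nondecreasing, $|u_c|=k$ off the surface, so ``$u_c\to 0$'' can only be read through $s\to 0$ and the (smoothed) sign factor --- the paper's proof glosses over this entirely, while you at least name it.
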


\begin{proof}
The stability of the learning algorithm is checked by using the following Lyapunov candidate function:
\begin{equation}\label{eq_smc_Lyapunov}
V= \frac{1}{2k^{*}} u_{c}^{2} + \frac{(k^{*})^{2}}{2 \gamma_{\alpha} } \Big(\frac{\alpha}{(k^{*})^{2}} - \alpha^{*}  \Big)^{2}
\end{equation}
The time-derivative of the aforementioned Lyapunov candidate function is calculated as follows:
\begin{equation}\label{eq_smc_Lyapunov_d} 
\dot{V} =   \frac{u_{c} \dot{u}_{c}}{k^{*}} + \frac{\dot{\alpha}}{  \gamma_{\alpha}} \Big(\frac{\alpha}{(k^{*})^{2}}  - \alpha^{*} \Big) 
\end{equation}
If \eqref{eq_totalcontrolaction} and \eqref{alpha} are inserted into \eqref{eq_smc_Lyapunov_d}, it is obtained as:
\begin{equation}\label{eq_smc_Lyapunov_dd} 
\dot{V} =   \frac{u_{c}}{k^{*}} (-\dot{u}_{n} + \dot{u} ) + \mid s \mid \Big( \frac{\alpha}{(k^{*})^{2}}  - \alpha^{*} \Big)
\end{equation}
$\dot{u}_{n}$ in \eqref{dotVc4} is inserted into \eqref{eq_smc_Lyapunov_dd}, the time-derivative of the Lyapunov function is written as:
\begin{equation}\label{eq_smc_Lyapunov_d2}
\dot{V} =   \frac{u_{c}}{k^{*}} \Big( -2\alpha  \textrm{sgn}\left( s \right) + \dot{u} \Big) + \mid s \mid \Big( \frac{\alpha}{(k^{*})^{2}}  - \alpha^{*} \Big)
\end{equation}
As it is stated in Assumption \ref{ass_BuBdotu}, the total input rate $\dot{u}$ is upper bounded by $B_{\dot{u}}$, and the conventional control action in \eqref{eq_robustterm} is inserted into \eqref{eq_smc_Lyapunov_d2}
\begin{eqnarray}\label{eq_smc_Lyapunov_d3}
\dot{V} & < &  \frac{- 2 \alpha k}{k^*} + \frac{k}{k^*} \mid s \mid B_{\dot{u}}  + \frac{\alpha \mid s\mid}{(k^{*})^{2}} -\alpha^{*} \mid s \mid  \nonumber \\
\dot{V} & < & \alpha \underbrace{\Big( \frac{-2 k k^* + \mid s \mid }{(k^*)^2} \Big)}_{<0} + \mid s \mid \underbrace{ \Big(\frac{k}{k^*}
 B_{\dot{u}}   -  \alpha^{*} \Big)}_{<0}
\end{eqnarray}
where $\alpha^{*}$ and $k^{*}$ are the final values of the learning rate and controller gain, and considered to be unknown parameters that are determined throughout the adaptations of learning rate and controller gain. If the final value of the learning rate $\alpha^{*}$ is large enough, i.e., $\alpha^{*} > B_{\dot{u}}$, and the term, which is the product of the controller gain $k$ and its final value $k^*$, is larger than the absolute value of the sliding surface $\mid s \mid$, i.e., $k^* k > \mid s \mid$, then the time-derivative of the Lyapunov candidate function is negative, i.e., $\dot{V}<0$. This purports that the sliding mode learning algorithm is stable. In other words, the output of the conventional controller $u_{c}$ converges to zero after the T2NFC algorithm learns the system. 
\end{proof}

\begin{remark}
As can be seen in \eqref{eq_robustterm_k}, the controller gain only increases throughout simulations.  Therefore, the final value of the controller $k^{*}$ is equal or bigger than $k$.  This results in $\frac{k}{k^*} \leq 1$.
\end{remark}

\begin{remark}
The adaptation rules for the conventional controller gain in \eqref{eq_robustterm_k}  and learning rate in \eqref{alpha} are enforced; therefore, the final values of the controller gain and learning rate are unknown and are determined throughout the adaptations of the controller gain and learning rate, and they can reach at large values to make system and sliding mode learning algorithm stable. 
\end{remark}

\section{Stability Analysis of the Overall System}\label{sec_overalstability}

An nth-order uncertain nonlinear system is described in the following form:
\begin{eqnarray}\label{eq_nonlinearsystem}
\dot{x}_{1} & = & x_{2} \nonumber \\
\vdots & \vdots & \vdots \nonumber \\
\dot{x}_{n-1} & = & x_{n} \nonumber \\
\dot{x}_{n} & = & f(\textbf{x}) + g u + d
\end{eqnarray}
where $f(\textbf{x})$ is the nonlinear function, $g$ is the coefficient of the control input and positive, $u$ is the control input, $d$ is the disturbance, and $\textbf{x}=[x_{1}, x_{2}, \hdots, x_{n}]^{T}$ is the state vector. The error $e$ is defined as $e=x_{d}-x_{1}$ where $x_{d}$ is the desired state. 
\begin{assumption}\label{ass_}
The nth-order time-derivative of the desired state, the nonlinear function, the disturbance, the output of the T2NFC and the term $E$, which is equal to $\dot{s}-\frac{d^{n} e}{d t^{n}}$, are assumed to be bounded:
\begin{equation}\label{eq_bounds}
\mid \frac{d^{n} x_{d}}{d t^{n}} \mid + \mid f(\textbf{x}) \mid + \mid d \mid + g \mid u_{n} \mid + \mid E \mid < B
\end{equation}
where $B$ is considered as a positive constant.
\end{assumption}
 
\begin{theorem}[Stability of the overall system]\label{theorem2}
The system to be controlled is the form of \eqref{eq_nonlinearsystem} and the controller in \eqref{eq_totalcontrolaction} is applied to the system. Then, if the final value of the controller gain $k^{*}$ is large enough, i.e., $k^{*} > 2 B$, the sliding surface $s$ converges to zero for an arbitrary initial condition. 
\end{theorem}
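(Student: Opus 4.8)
The plan is to reduce the theorem to a standard adaptive sliding-mode reaching argument and then close it with a Lyapunov function that parallels the one used in Theorem~\ref{theorem1}. First I would differentiate the sliding surface. By the very definition of the term $E$ in Assumption~\ref{ass_} we have $\dot{s}=\frac{d^{n}e}{dt^{n}}+E$, and since $e=x_{d}-x_{1}$ while the chain of integrators in \eqref{eq_nonlinearsystem} gives $\frac{d^{n}x_{1}}{dt^{n}}=\dot{x}_{n}=f(\textbf{x})+g u+d$, substituting $u=u_{c}+u_{n}=k\,\textrm{sgn}(s)+u_{n}$ from \eqref{eq_totalcontrolaction}--\eqref{eq_robustterm} yields
\[
\dot{s}=\underbrace{\frac{d^{n}x_{d}}{dt^{n}}-f(\textbf{x})-g u_{n}-d+E}_{=:\,\Delta}-g k\,\textrm{sgn}(s).
\]
The key observation is that Assumption~\ref{ass_} together with the triangle inequality gives $\mid\Delta\mid<B$, so the entire effect of the unknown dynamics, the disturbance, the reference, the learning controller, and the lower-order sliding-surface terms collapses into a single lumped signal bounded by $B$, acting against the high-gain term $g k\,\textrm{sgn}(s)$.

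Second, I would introduce the Lyapunov candidate
\[
V=\frac{1}{2}s^{2}+\frac{1}{2\gamma_{k}}(k-k^{*})^{2},
\]
which mirrors \eqref{eq_smc_Lyapunov} with $s$ playing the role of $u_{c}$ and the controller gain $k$ playing the role of the learning rate. Differentiating and inserting the expression for $\dot{s}$ above together with the gain-adaptation law \eqref{eq_robustterm_k} gives, since $s\,\textrm{sgn}(s)=\mid s\mid$, the identity $\dot{V}=s\Delta-g k\mid s\mid+\tfrac{1}{2}(k-k^{*})\mid s\mid$. Bounding $s\Delta\leq\mid s\mid B$ then produces $\dot{V}\leq\mid s\mid\big(B-g k+\tfrac{1}{2}(k-k^{*})\big)$.

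Third, I would close the estimate. Because $k>0$ (and, by the Remark following Theorem~\ref{theorem1}, monotonically nondecreasing with $k\leq k^{*}$), the combined gain contribution $-g k+\tfrac{1}{2}(k-k^{*})$ is dominated by $-\tfrac{1}{2}k^{*}$, so the hypothesis $k^{*}>2B$ forces the bracket to be strictly negative; that is, $\dot{V}\leq-\eta\mid s\mid$ for some $\eta>0$. An inequality of this reaching-condition form shows that $V$ is nonincreasing and that $\int_{0}^{\infty}\mid s\mid\,dt$ is finite, whence $k$ stays bounded and, by a Barbalat-type argument, $s$ converges to the manifold $s=0$ for any initial condition; once the adapted gain has grown enough that $g k>B$, the convergence is in fact in finite time.

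The step I expect to be the main obstacle is the treatment of the adaptive, unknown final gain $k^{*}$: unlike a fixed design gain it is a moving target generated by the adaptation itself, so one must verify that $V$ remains finite throughout and that the stated condition $k^{*}>2B$ is actually consistent with the reachable values of $k$ under \eqref{eq_robustterm_k}. Closely tied to this is the bookkeeping of the positive input coefficient $g$ in the cross term $-g k+\tfrac{1}{2}k$, which must be shown not to spoil the sign of the bracket that delivers $k^{*}>2B$. Finally, because $\textrm{sgn}(s)$ is discontinuous, the reaching and invariance claims should be read in the Filippov / sliding-mode sense rather than classically.
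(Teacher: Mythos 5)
Your route is the paper's route: the same lumping of $\frac{d^{n} x_{d}}{d t^{n}} - f(\textbf{x}) - d - g u_{n} + E$ into a single term bounded by $B$ via Assumption~\ref{ass_}, the same adaptive sliding-mode Lyapunov argument pairing $\frac{1}{2}s^{2}$ with a gain-estimation-error term, and the same use of the adaptation law \eqref{eq_robustterm_k}. But there is a genuine gap at the decisive sign step, and it is exactly the point you yourself flagged as the "bookkeeping of $g$" obstacle without resolving it. With your candidate $V=\frac{1}{2}s^{2}+\frac{1}{2\gamma_{k}}(k-k^{*})^{2}$ the bracket is $B-gk+\frac{1}{2}(k-k^{*})$, and your claim that $-gk+\frac{1}{2}(k-k^{*})$ is dominated by $-\frac{1}{2}k^{*}$ is equivalent to $\left(\frac{1}{2}-g\right)k\leq 0$, which holds only when $g\geq\frac{1}{2}$. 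The theorem assumes merely $g>0$; for $g\in(0,\frac{1}{2})$ the term $\left(\frac{1}{2}-g\right)k$ is positive and grows with $k$, so $k^{*}>2B$ does not force the bracket negative (for $k$ near $k^{*}$ the bracket is approximately $B-gk^{*}$, which would require $k^{*}>B/g>2B$). As written, your estimate therefore fails on part of the admissible parameter range, and negativity of $\dot{V}$ uniformly in $k$ — which is the whole point of the adaptive construction, since $k$ starts small — is not established.

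The paper's fix is precisely to weight the gain-error term by $g$: its candidate is $V=\frac{1}{2}s^{2}+\frac{g}{2\gamma_{k}}(k-k^{*})^{2}$ in \eqref{eq_lyapunov}. Then the adaptation contributes $\frac{g\mid s\mid}{2}(k-k^{*})$, the cross terms cancel exactly, $-gk\mid s\mid+\frac{g\mid s\mid}{2}k=-\frac{gk\mid s\mid}{2}$, and one gets $\dot{V}<\mid s\mid\left(B-\frac{gk^{*}}{2}\right)-\frac{gk\mid s\mid}{2}$, which is strictly negative for every $k>0$ under the gain condition, with no case analysis on $g$. (Strictly speaking, even the paper's threshold should read $k^{*}>2B/g$ unless $g\geq 1$ — the same $g$-bookkeeping issue, but there it is confined to a constant rather than breaking the sign of a $k$-dependent term.) Your concluding remarks — integrability of $\mid s\mid$, boundedness of $k$, a Barbalat-type argument, and the Filippov reading of the discontinuous law — are sound and in fact more careful than the paper's one-line finite-time conclusion from $\dot{V}<-\frac{gk\mid s\mid}{2}$, but they all rest on the negative bracket, so your proof is incomplete until you adopt the $g$-weighted Lyapunov function (or restrict to $g\geq\frac{1}{2}$).
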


\begin{proof}
The stability of the overall system is checked for an nth-order uncertain nonlinear system by following Lyapunov candidate function:
\begin{equation}\label{eq_lyapunov}
V = \frac{1}{2} s^{2}  + \frac{g}{2 \gamma_{k}} (k - k^{*})^{2}
\end{equation}
The time-derivative of $V$ is calculated as: 
\begin{equation}\label{eq_lyapunovrate}
\dot{V} =   s \dot{s}  + \frac{g \dot{k}}{\gamma_{k}} (k - k^{*})
\end{equation}
The sliding surface rate $\dot{s}$ is obtained by taking time-derivative of the sliding surface defined in \eqref{eq_s}. Then, it is inserted into \eqref{eq_lyapunovrate} as follows:
\begin{equation}\label{eq_lyapunovratee}
\dot{V}  =  s \Big(\frac{d^{n} e}{d t^{n}} \underbrace {+ \ldots+ \lambda^{n-1} \dot{e} \Big)}_{E} + \frac{g \dot{k}}{\gamma_{k}} (k - k^{*})
\end{equation}
Since $E=\dot{s}-\frac{d^{n} e}{d t^{n}}$ as stated in Assumption \ref{ass_} and $\frac{d^{n} e}{d t^{n}} = \frac{d^{n} x_{d}}{d t^{n}} - \frac{d^{n} x}{d t^{n}}$ , it is obtained as 
\begin{eqnarray}\label{eq_lyapunovratee2}
\dot{V}  = s \Big(\frac{d^{n} x_{d}}{d t^{n}} - \frac{d^{n} x}{d t^{n}} +  E \Big) + \frac{g\dot{k}}{\gamma_{k}} (k - k^{*}) 
\end{eqnarray}
Then, it is re-written by taking nonlinear system in \eqref{eq_nonlinearsystem} into consideration as follows:
\begin{eqnarray}\label{eq_lyapunovratee3}
\dot{V}  =  s \Big(  \frac{d^{n} x_{d}}{d t^{n}} - f(\textbf{x})- d - g u +  E \Big) + \frac{g\dot{k}}{\gamma_{k}} (k - k^{*} )
\end{eqnarray}
The control law in \eqref{eq_totalcontrolaction} and the adaptation for the controller gain in \eqref{eq_robustterm_k} are inserted into \eqref{eq_lyapunovratee3}.
\begin{equation}\label{eq_lyapunovrate2}
\dot{V}  =  s \Big(  \frac{d^{n} x_{d}}{d t^{n}} -  f(\textbf{x})  - d  - g u_{n} - g u_{c} + E \Big) + \frac{g \mid s \mid}{2} (k - k^{*})
\end{equation}
The conventional control action in \eqref{eq_robustterm} is inserted into \eqref{eq_lyapunovrate2}
\begin{equation}\label{eq_lyapunovrate3}
\dot{V}  =  s \Big(  \frac{d^{n} x_{d}}{d t^{n}} -  f(\textbf{x})  - d  - g u_{n} + E  - g k  \textrm{sgn}\left(s \right)  \Big) + \frac{g \mid s \mid}{2} (k - k^{*})
\end{equation}
As stated in Assumption \ref{ass_}, the term $E$, the nth-order time-derivative of the reference $\frac{d^{n} x_{d}}{d t^{n}} $, the function $f(\textbf{x})$, the disturbance $d$ and the output of the T2NFC $u_{n}$ are assumed to be upper bounded, the aforementioned equation \eqref{eq_lyapunovrate3} is obtained as follows:
\begin{equation}\label{eq_lyapunovrate5}
 \dot{V} < \mid s \mid ( B - g k ) + \frac{g \mid s \mid}{2} (k - k^{*})
\end{equation}
where $k^{*}$ is the final value of the controller gain and considered to be an unknown parameter that is determined throughout the adaptation of the controller gain. If the final value of the controller gain $k^{*}$ is large enough, i.e., $k^{*} > 2 B $, $\dot{V}$ yields
\begin{equation}
 \dot{V}< -\frac{g k \mid s \mid}{2}
 \end{equation}
Since $g, k >0$, this guarantees the stability of the overall system and purports that the sliding surface converges to zero in finite time.
\end{proof}

\begin{remark}
Since the controller gain can reach at a large value to make the system stable, it is not required to know the upper bound. This is one of the advantages of the SMLC algorithm.
\end{remark}

\begin{remark}
A sliding motion appears on the sliding manifold $s=0$ after finite time $t_{h}$ if the condition $s \dot{s} <0$ is satisfied for all $t$ in some nontrivial semi-open sub-interval of time of the form $[t, t_{h})\subset (-\infty, t_{h})$. Therefore, a sliding motion will be maintained on $s=0$ for all $t> t_{h}$.
\end{remark}

\section{Simulation Results}\label{section_simulation}

The performance of the developed SMLC algorithm against an external disturbance is firstly shown in Section \ref{section_pendulum} and then analyzed under noisy conditions in Section \ref{section_num_ex}. Throughout simulation studies, the number of membership functions of the T2NFC are set to 3, i.e., $I = J = 3$ and the sampling time is set to $0.01$ second, Since SMC theory suffers from high-frequency oscillations, the following function in \eqref{chatter} is used as the sign function to remove the chattering effect:
\begin{equation}\label{chatter}
 \textrm{sgn}\left(s \right):=\frac{s}{|s|+\chi}
\end{equation}
with $\chi=0.05$.

To avoid division by zero in the adaptation laws of
\eqref{c_1i_lower}-\eqref{q}, an instruction is included in the algorithm to make the denominator equal to 0.001 when its calculated value is smaller than this threshold. Moreover, the controller gain of the conventional controller and learning rate of the learning algorithm must not have infinite values under noisy conditions; therefore, a dead-zone is employed. If the sliding surface is within the bounds of the dead-zone parameter $\epsilon=0.001$, i.e., $ \mid s \mid < \epsilon$, then these parameters are not updated. Furthermore, the reference must be changing smoothly and there must exist no discontinuous disturbance to avoid infinite values for the controller gain and learning rate.

\subsection{Scenerio 1: Control Performance Against External Disturbance}\label{section_pendulum}

To evaluate control performance against disturbances, the developed SMLC algorithm is applied to an adaptive cruise control system, which is illustrated in Fig. \ref{fig_ACC_schematic} and the control structure is shown in Fig. \ref{fig_accsystem}. The equations of motion for the longitudinal vehicle dynamics are written as follows \cite{Stankovic2000,Kayacan2017}

\begin{eqnarray}\label{}
\dot{x}_{1} & = & x_{2} \nonumber \\
\dot{x}_{2} & = & x_{3} \nonumber \\
\dot{x}_{3} & = & -2\frac{k_{a}}{m} x_{2} x_{3} - \frac{1}{\tau} \Big[x_{3} + \frac{k_{a}}{m} x_{2}^{2}  \Big] + \frac{u}{m \tau} + d
\end{eqnarray}
where $x_{1}$, $x_{2}$, $x_{3}$, m, $k_{a}$, $\tau$, u and d represent respectively the position, velocity and acceleration, the mass of the vehicle, the aerodynamic drag coefficient, the engine time-lag, the throttle command input and the external disturbance. The numerical values for the parameters in this study are $m=9$ $kg$, $k_{a}=0.26$, $\tau=0.1$ \cite{Mehra2015}.

\begin{figure}[h!]
  \centering
  \includegraphics[width=4.0in]{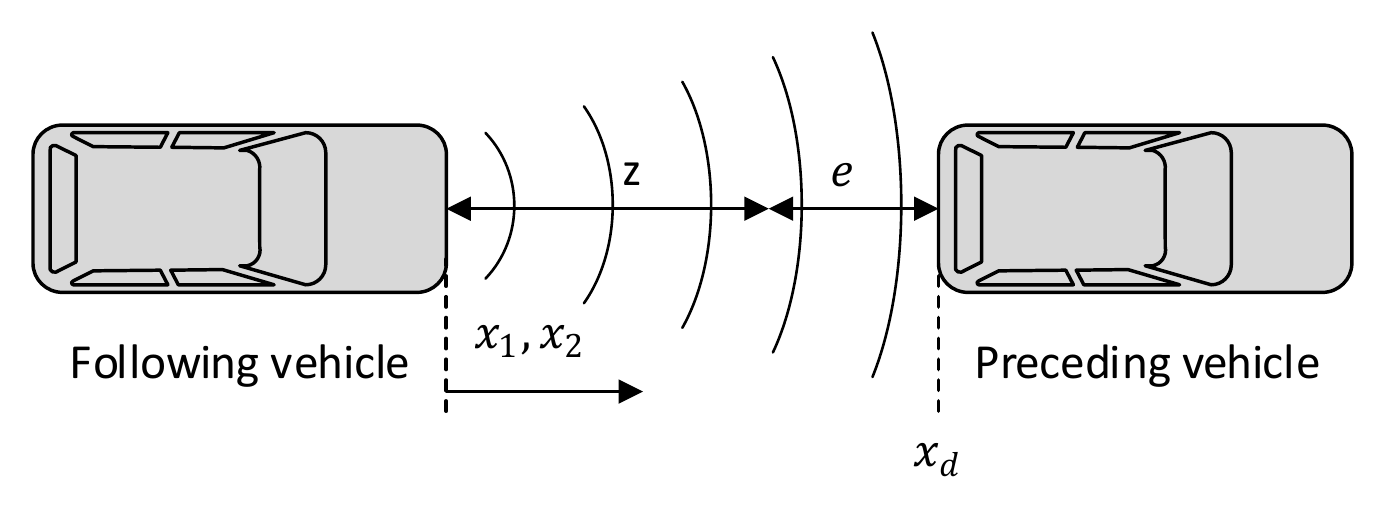}\\
  \caption{Schematic diagram of the adaptive cruise control system}\label{fig_ACC_schematic}
\end{figure}

\begin{figure}[h!]
  \centering
  \includegraphics[width=4.0in]{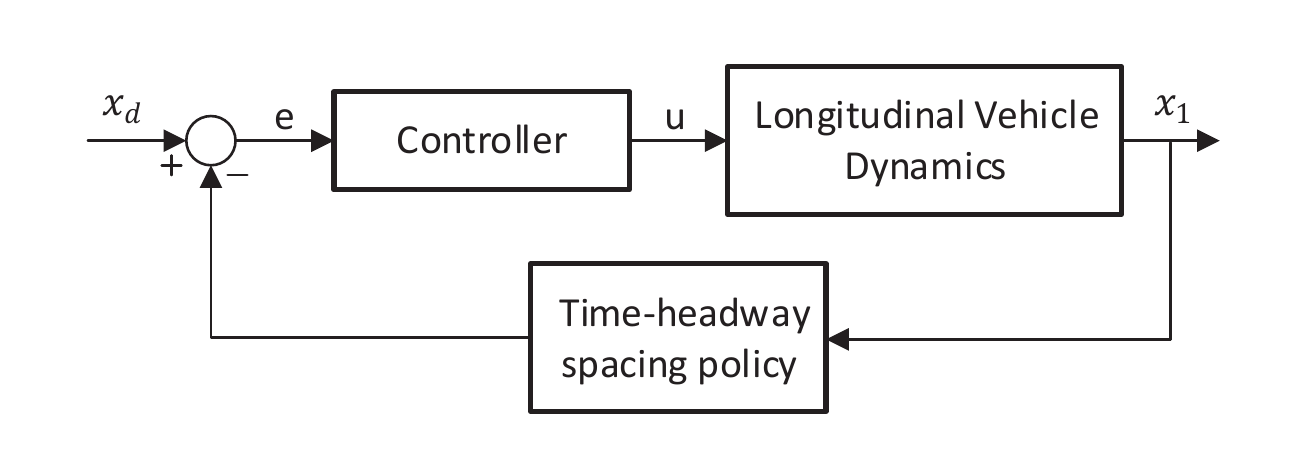}\\
  \caption{Control structure of the adaptive cruise control system}\label{fig_accsystem}
\end{figure}

The time-headway spacing policy to follow the preceding vehicle with the desired relative distance is  formulated as follows:
\begin{equation}\label{eq_timeheadway}
z=h x_{2}
\end{equation}
where z is the desired relative inter-vehicle distance and h is the desired time headway. The spacing policy is called the constant time-headway time policy which aims a constant inter-vehicle time gap. The position reference in this study is considered as the position of the preceding vehicle. The relative spacing error is defined by taking the spacing policy in \eqref{eq_timeheadway} into account as follows:
\begin{equation}\label{eq_error}
e=x_{d} - x_{1} - h x_{2}
\end{equation}

The sliding surface for the adaptive cruise control system is second-order since the adaptive cruise control systems is a third-order system. The slope of the sliding surface $\lambda$ to update the T2NFC is set to $1$; therefore, the sliding surface becomes as $s=\ddot{e} + 2 \dot{e} + e$ as formulated in \eqref{eq_s}. The coefficients $\gamma_{k}$ and $\gamma_{\alpha}$ for the adaptions of the controller gain in  \eqref{eq_robustterm_k} and the learning rate in \eqref{alpha} are set to $0.1$. The initial conditions on the states of the system are considered as $[x_{1}, x_{2}, , x_{3}]=[0,0, 0]^{T}$. An external disturbance is imposed in the vehicle dynamics as follows:
\begin{equation}\label{eq_disturbance}
  d(t) = 1+ 0.25 \sin{(t)}  
  \end{equation}

The test trajectory is defined by the position reference $x_{d}(t)$ consists of ramp signals as follows:
\begin{equation}\label{eq_aref}
  x_{d}(t) = \Bigg \{
    \begin{array}{rl}
    0 \leq t < 20  & x_{d}= t \\
   20 \leq t < 40 & x_{d}= t + \frac{0.05}{2}(t-20)^{2} \\
   40 \leq t < 40 & x_{d}= t + \frac{0.05}{2}(t-20)^{2} - \frac{0.05}{2}(t-40)^{2}  
 \end{array} 
\end{equation}

The position, speed and acceleration responses of the vehicle are respectively shown in Figs. \ref{fig_pos}-\ref{fig_acc}. The developed SMLC algorithm can learn the system behavior online in finite time so that it exhibits robust control performance against the external disturbance. As the error is shown in Fig. \ref{fig_error}, it controls the system without any steady-state error. 

The control signals are shown in Fig. \ref{fig_controlsignals}. The generated control signal by the conventional control action $u_c$ converges zero while the generated control signal by the T2NFC $u_n$ takes overall control signal. Thus, the T2NFC becomes the leading controller after a short time period. The output of the conventional controller $u_{c}$ becomes non-zero only at the beginning of the simulations while the T2NFC is learning system behaviour.

The adaptations of the controller gain $k$, the learning rate $\alpha$ and the parameter $q$ are presented in Figs. \ref{fig_k}-\ref{fig_q}. The parameter $q$ is generally set to $0.5$ in T2FLCs while the controller gain and learning rate are chosen arbitrarily in this study. Therefore, the initial conditions of these parameters are set to $k(0)=1$, $\alpha(0)=3$ and $q(0)=0.5$. Thanks to these adaptations, the controller gain, the learning rate and the proportions of the upper and lower membership functions are updated throughout the simulations. The adaptations of the controller gain $k$ and learning rate $\alpha$ converge to a certain value since (i) the reference is changing smoothly, (ii) there is no discontinuous disturbance on the system and (iii) the dead-zone parameter is not lower than noise-level in real-time applications.

In literature, disturbance observers-based control approaches have been generally implemented for uncertain systems \cite{7569085}. In these approaches, unmodeled dynamics are considered as an external disturbance, and a control law including disturbance estimate is derived. In our work, there is no need of such an observer due to the fact that the proposed algorithm learns system behavior. This is one of the superiorities of the developed algorithm in the paper.  

\begin{figure*}[t!]
\centering
\subfigure[ ]{
\includegraphics[width=0.47\textwidth]{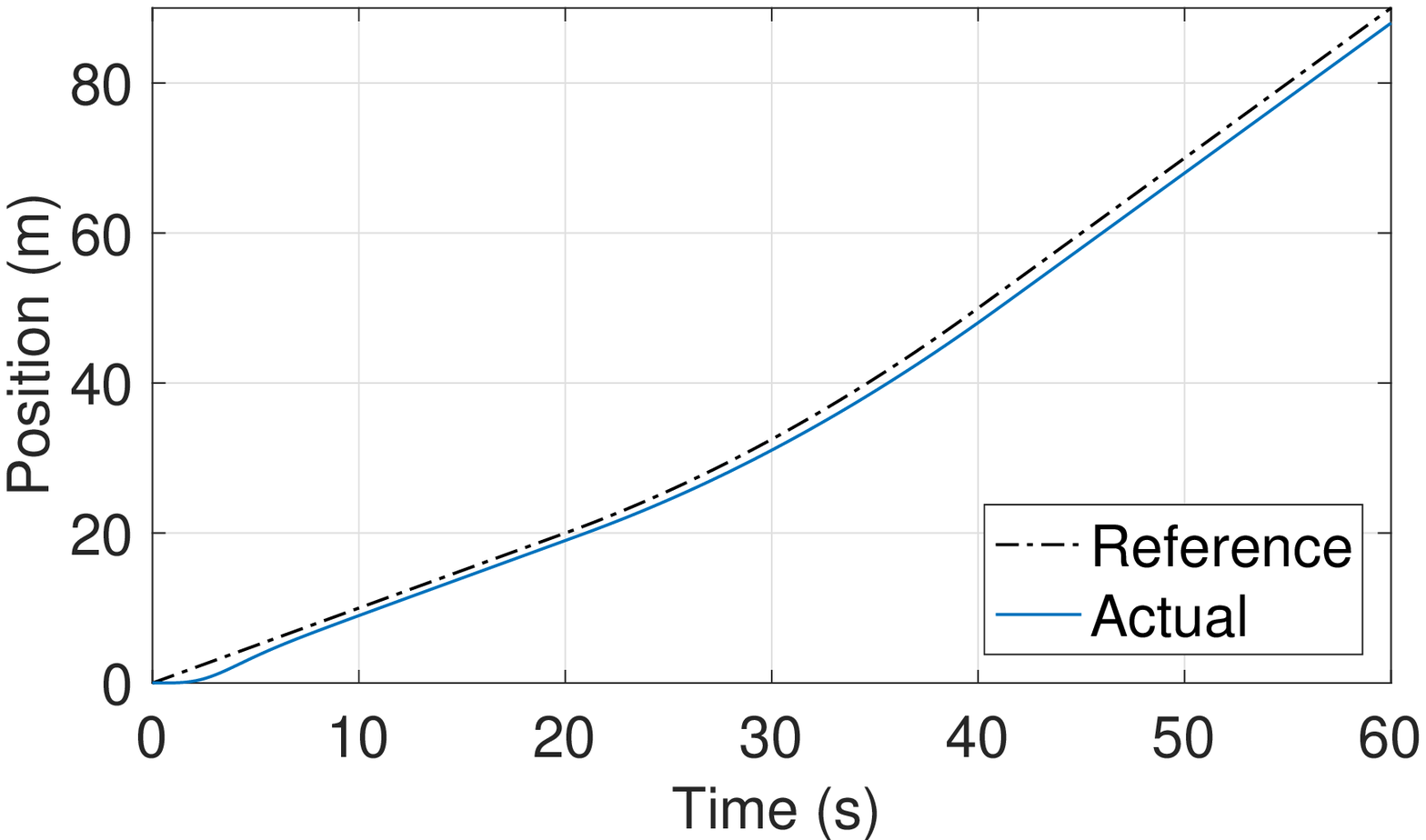}
\label{fig_pos}
}
\subfigure[ ]{
\includegraphics[width=0.47\textwidth]{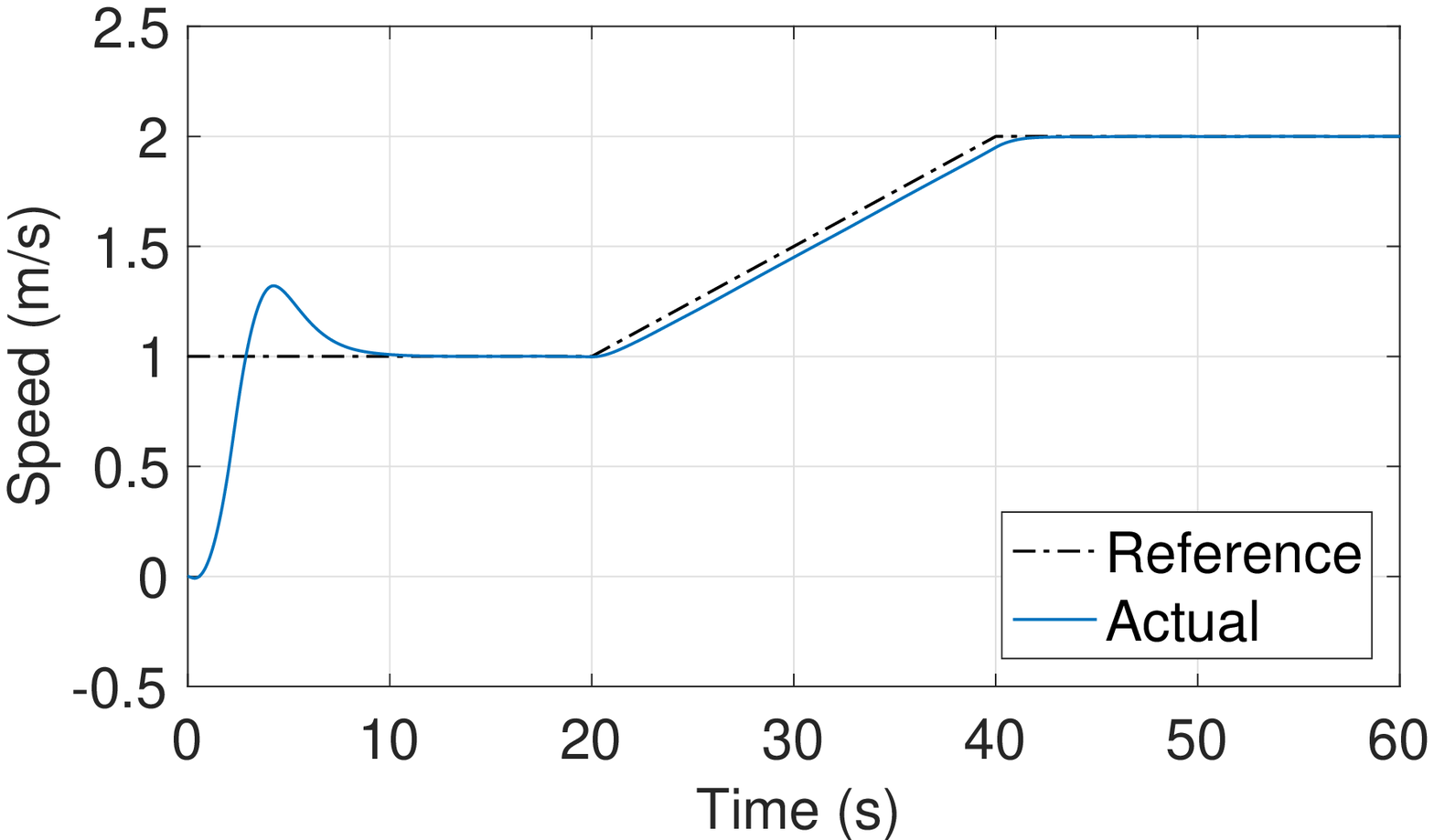}
\label{fig_speed}
}
\subfigure[ ]{
\includegraphics[width=0.47\textwidth]{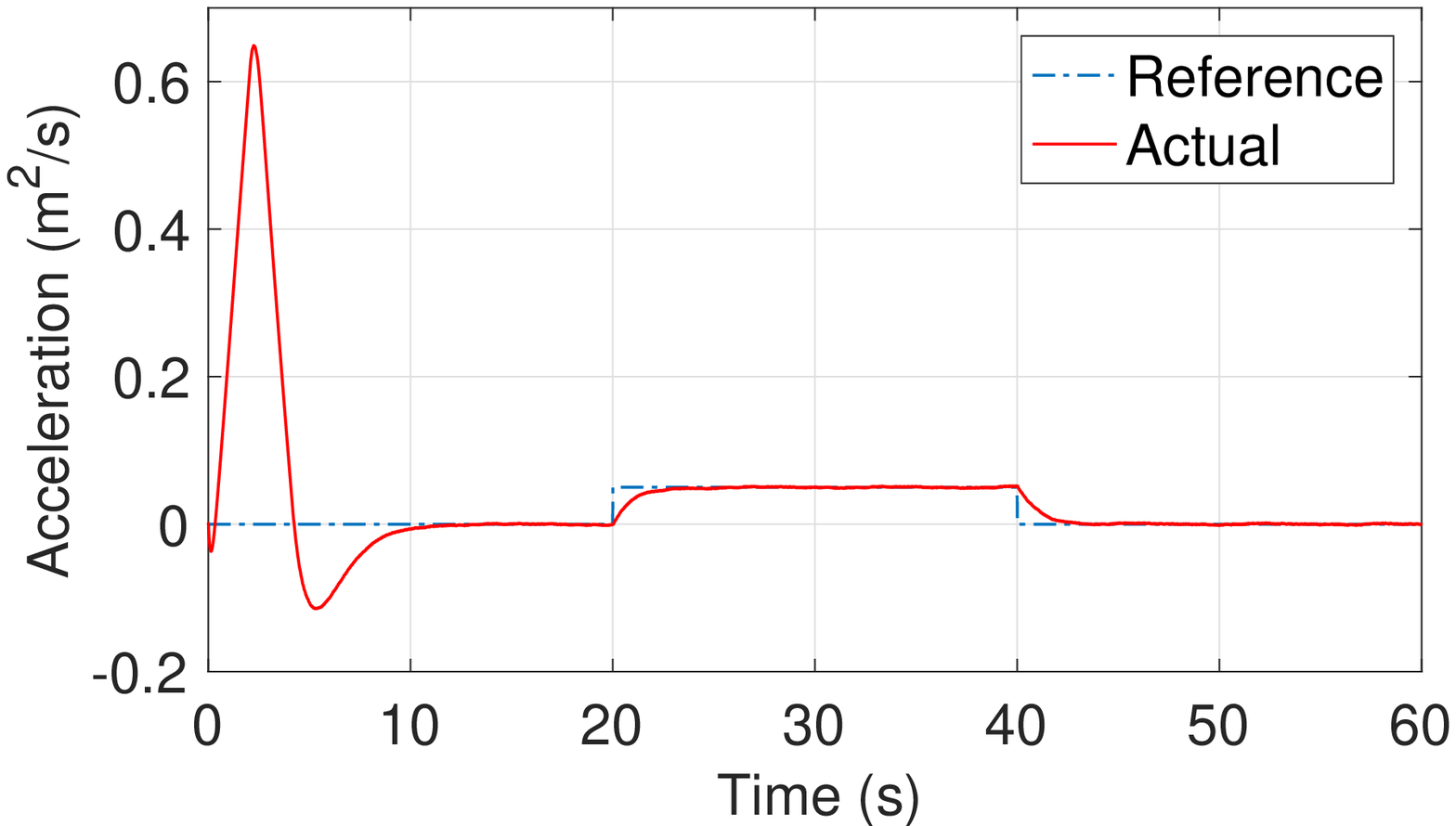}
\label{fig_acc}
}
\subfigure[ ]{
\includegraphics[width=0.47\textwidth]{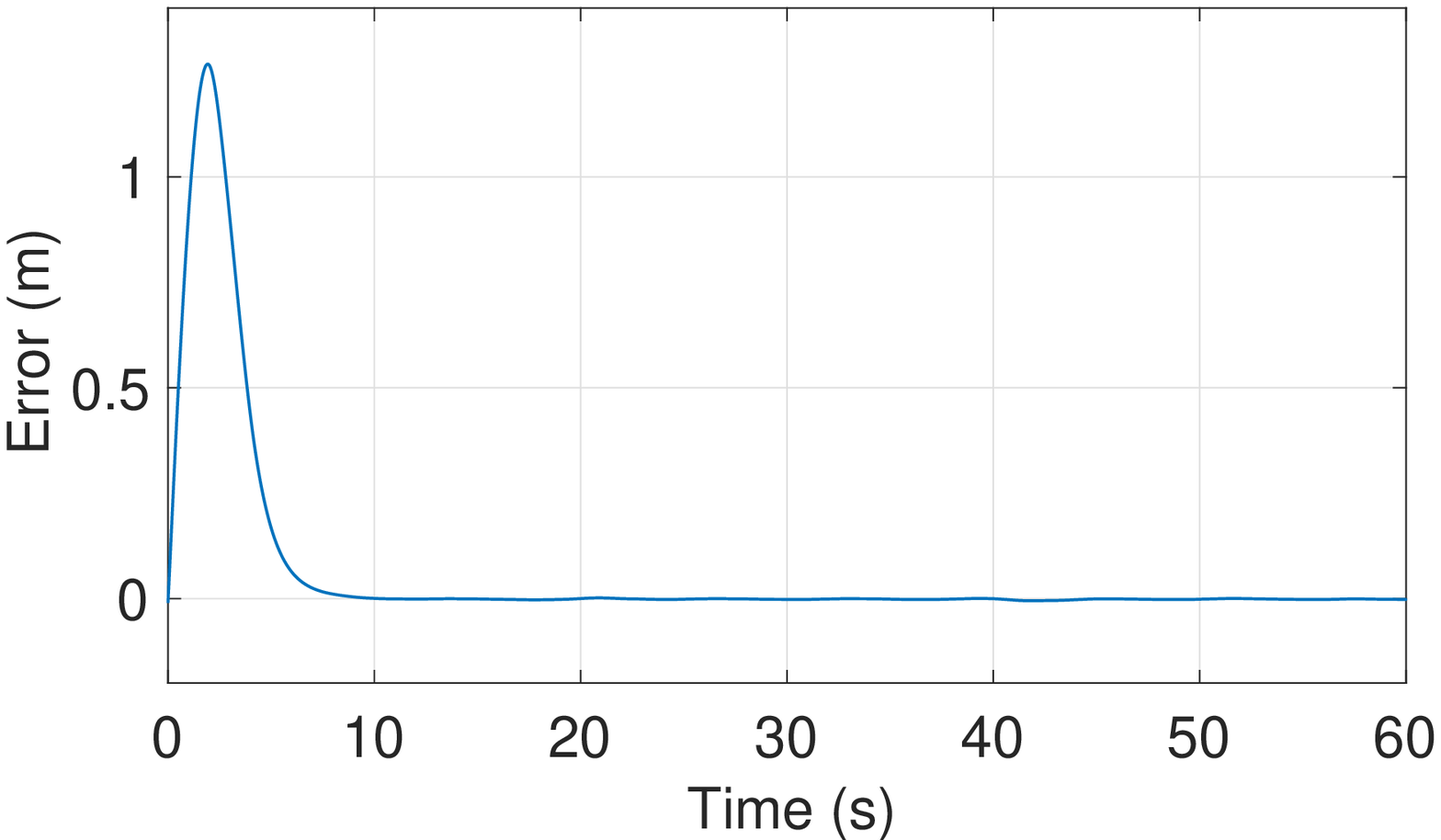}
\label{fig_error}
}
\subfigure[ ]{
\includegraphics[width=0.47\textwidth]{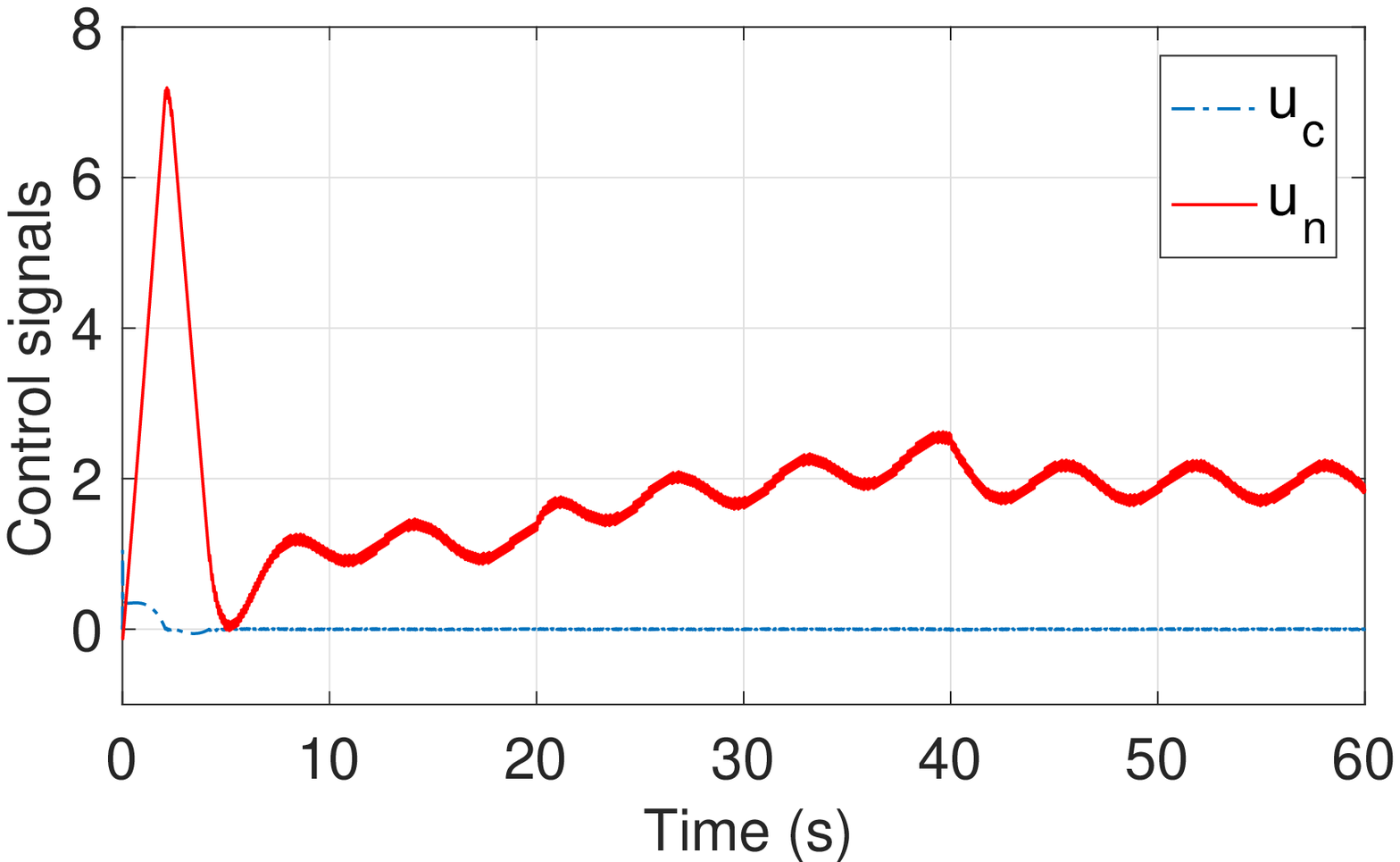}
\label{fig_controlsignals}
}
\subfigure[ ]{
\includegraphics[width=0.47\textwidth]{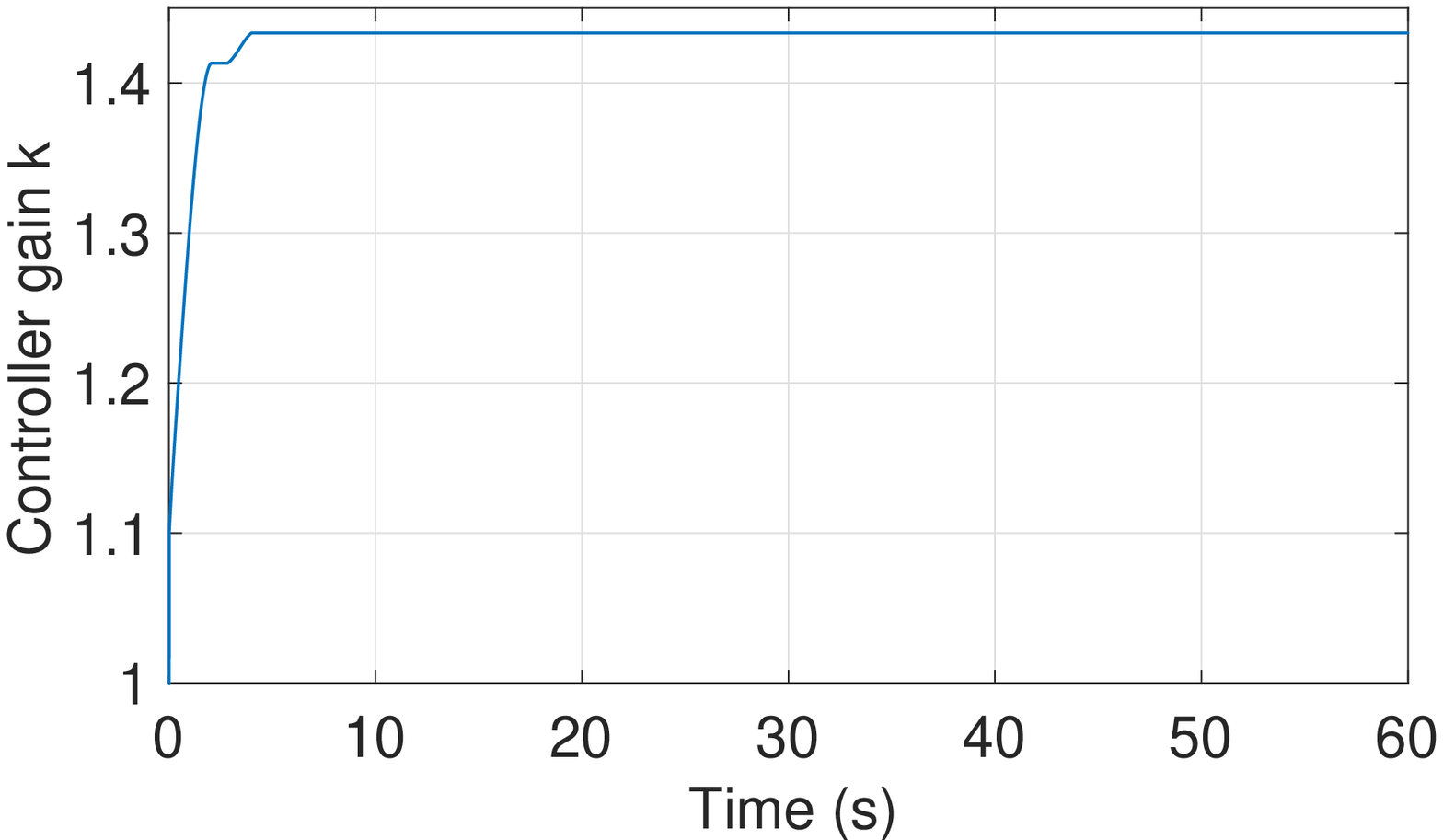}
\label{fig_k}
}
\subfigure[ ]{
\includegraphics[width=0.47\textwidth]{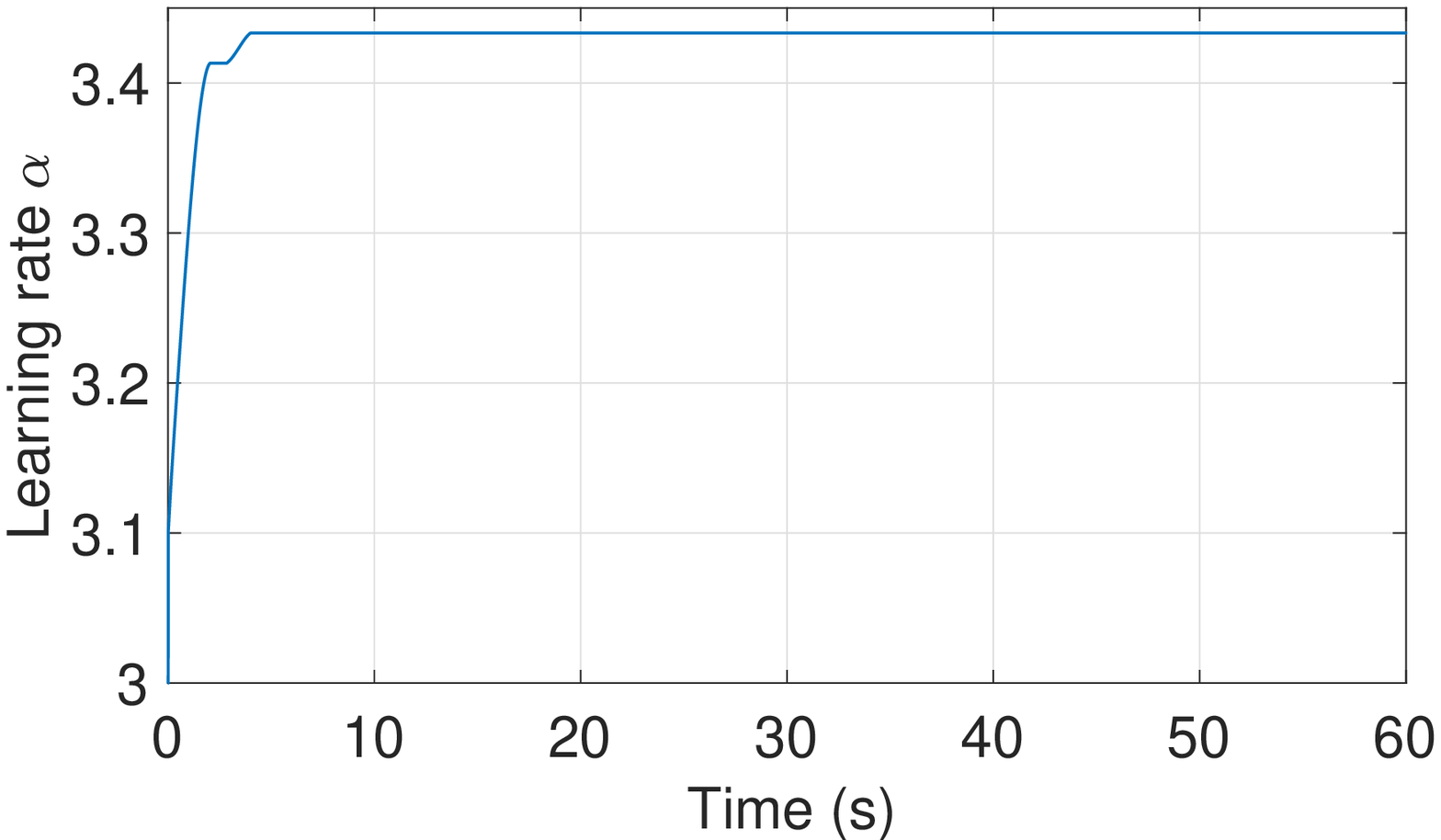}
\label{fig_alpha}
}
\subfigure[ ]{
\includegraphics[width=0.47\textwidth]{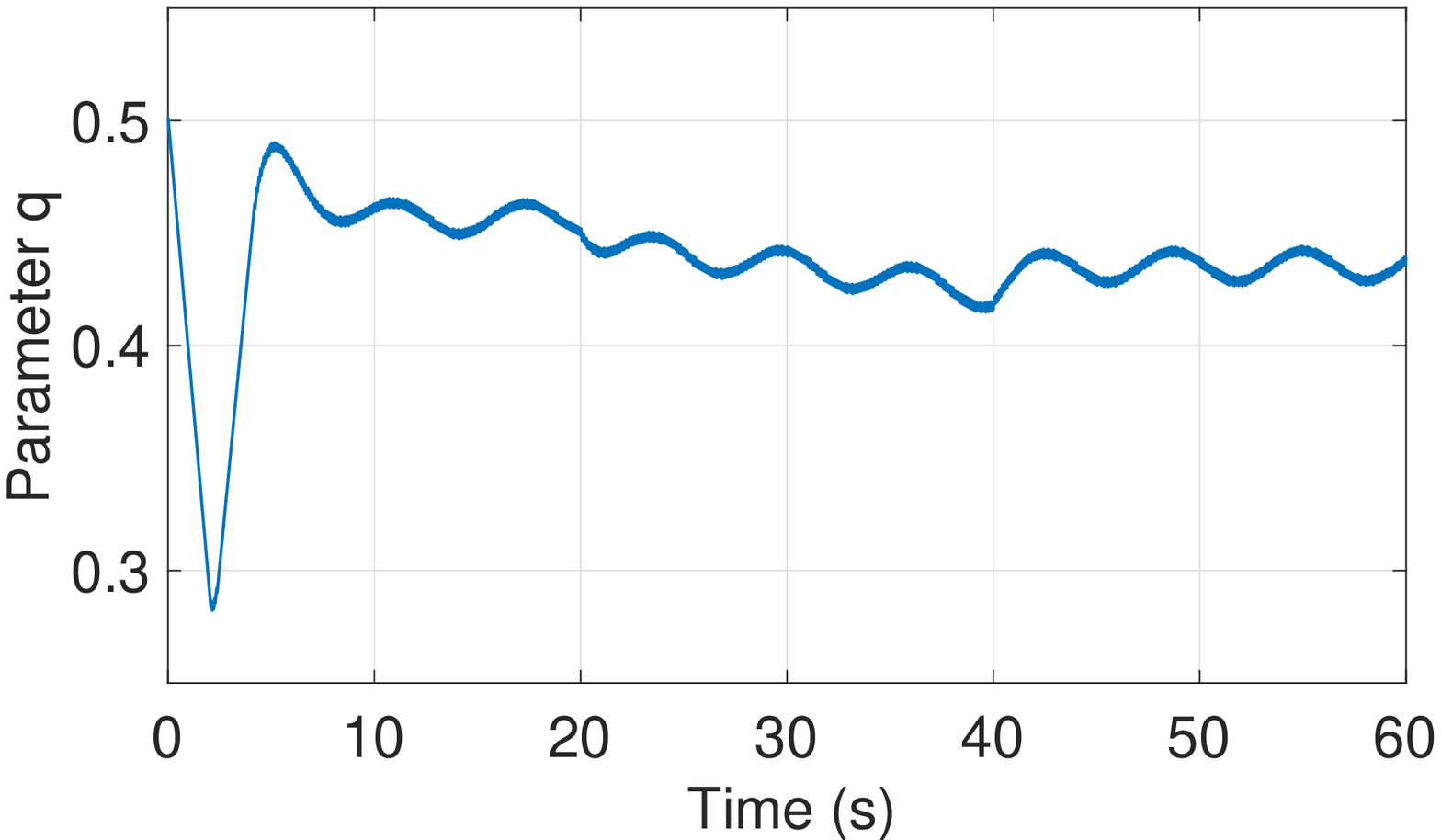}
\label{fig_q}
}
\caption[Optional caption for list of figures]{Control performance on an adaptive cruise control system: (a) Position (b) Speed (c) Acceleration (d) Error (e) Control signals (f) Adaptation of the controller gain (g) Adaptation of the learning rate (g) Adaptation of the parameter q}
\label{sensors}
\end{figure*}

\subsection{Scenerio 2: Noise Analysis}\label{section_num_ex}

To evaluate control performance under noisy condition, the developed SMLC algorithm is applied to the following numerical system \cite{Yang2013}
\begin{eqnarray}\label{}
\dot{x}_{1}& = & x_{2} \nonumber \\
\dot{x}_{2} & = & - 2 x_{1} - x_{2} + e^{x_{1}}+ u
\end{eqnarray}

The initial conditions on the system states are set to $x(0)=[1, -1]^{T}$. The slope of the sliding surface $\lambda$ is set to $2$. The coefficients $\gamma_{k}$ and $\gamma_{\alpha}$ for the adaptions of the controller gain in \eqref{eq_robustterm_k} and the learning rate in \eqref{alpha} are respectively set to $1$ and $0.1$.

In order to test the robustness of this control approach, the states $x_{1}$ and $x_{2}$ are measured with a noise level signal-to-noise ratio (SNR) = $50$ $dB$. The states' responses are shown in Figs. \ref{fig_x1} and \ref{fig_x2}. The T2NFC can learn system behavior online and thus regulates the system without any steady-state error. Thanks to the proposed T2NFC structure, our control algorithm ensures robust control performance against noisy measurements. 

The control signals are shown in Fig. \ref{fig_x1x2controlsignals}. As seen in this figure, the generated control signal by the robust control action $u_c$ converges zero while the generated control signal by the T2NFC $u_n$ is different from zero. It is concluded that the T2NFC can take the overall control of the system in a very short time period. 

The adaptations of the controller gain $k$, the learning rate $\alpha$ and the parameter $q$ are presented in Figs. \ref{fig_x1x2k}-\ref{fig_x1x2q}. The initial conditions of these parameters are considered as $k(0)=1$, $\alpha(0)=0.03$ and $q(0)=0.5$. Thanks to these adaptations, the controller gain, the learning rate and the proportions of the upper and lower membership functions are updated throughout the simulations. Moreover, the controller gain converges to a certain value under noisy conditions owing to the dead-zone.

\begin{figure*}[t!]
\centering
\subfigure[ ]{
\includegraphics[width=0.47\textwidth]{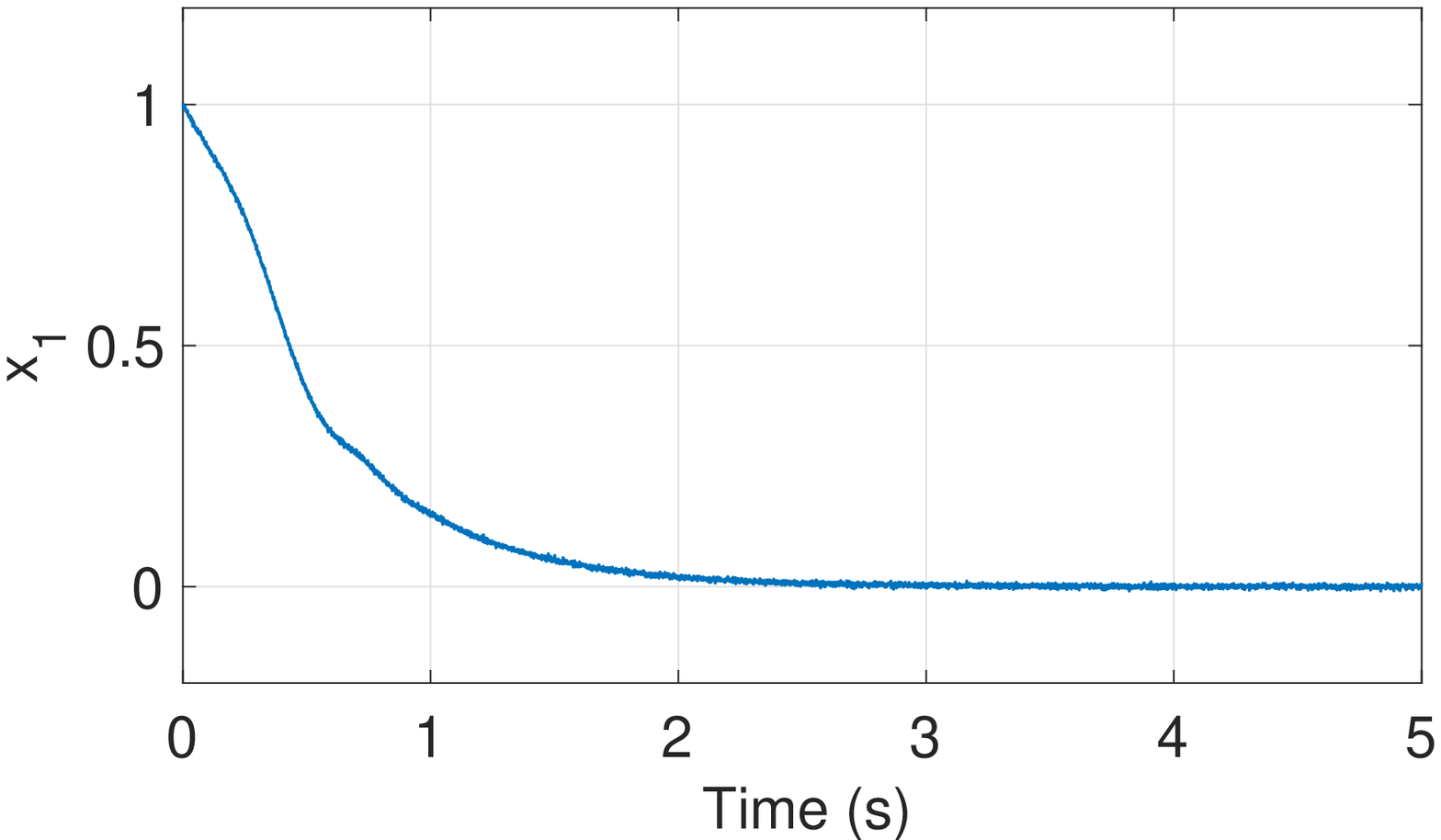}
\label{fig_x1}
}
\subfigure[ ]{
\includegraphics[width=0.47\textwidth]{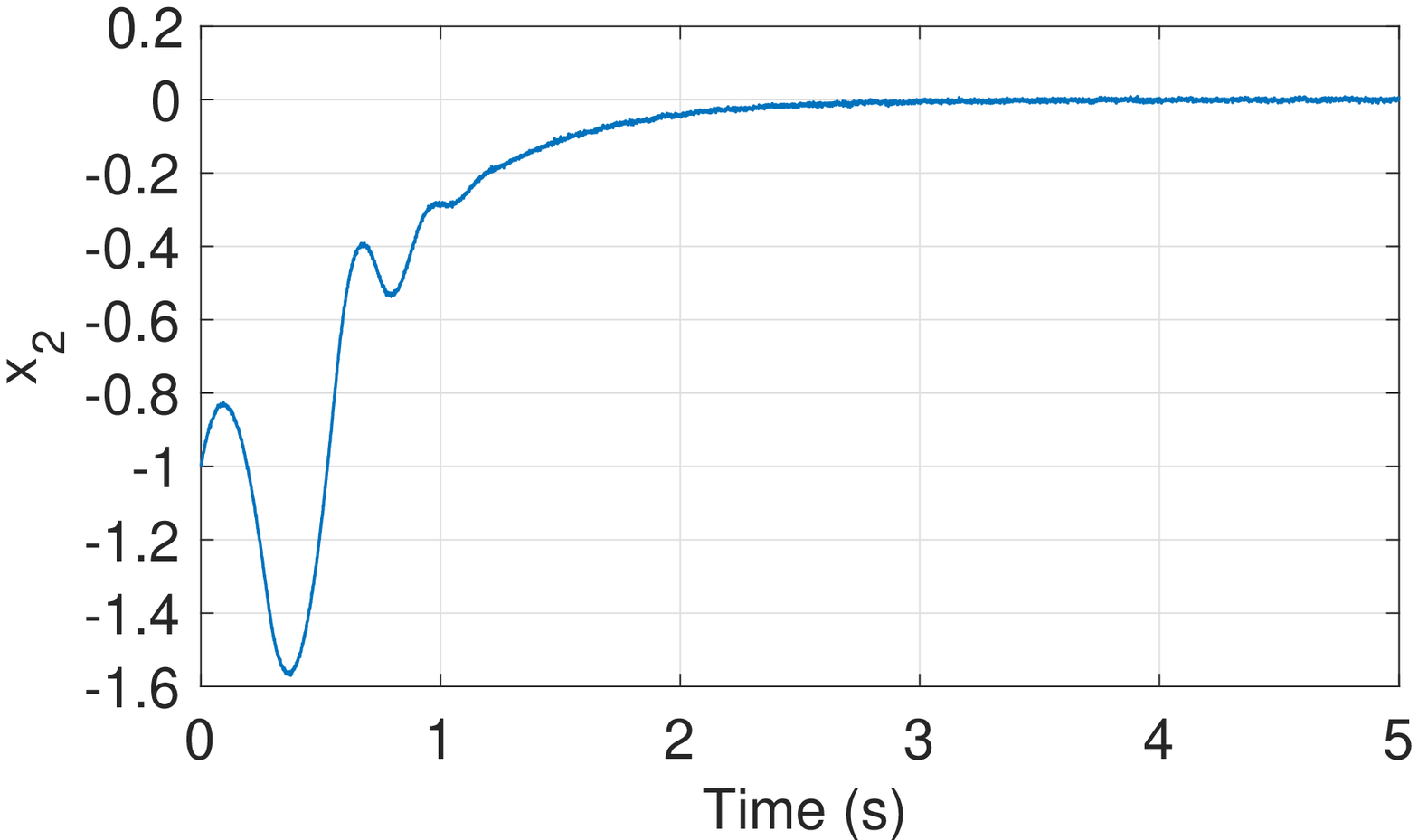}
\label{fig_x2}
}
\subfigure[ ]{
\includegraphics[width=0.47\textwidth]{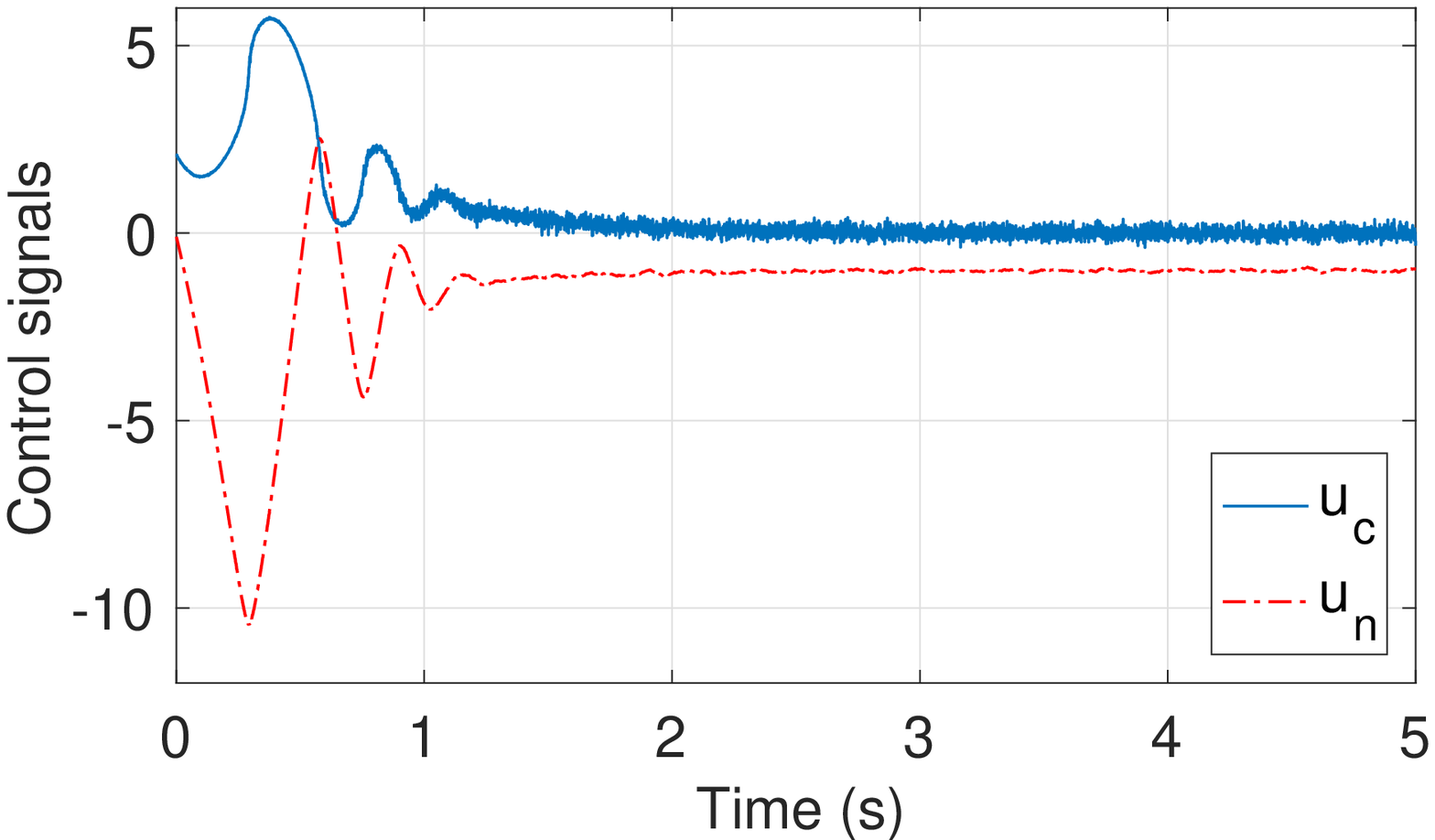}
\label{fig_x1x2controlsignals}
}
\subfigure[ ]{
\includegraphics[width=0.47\textwidth]{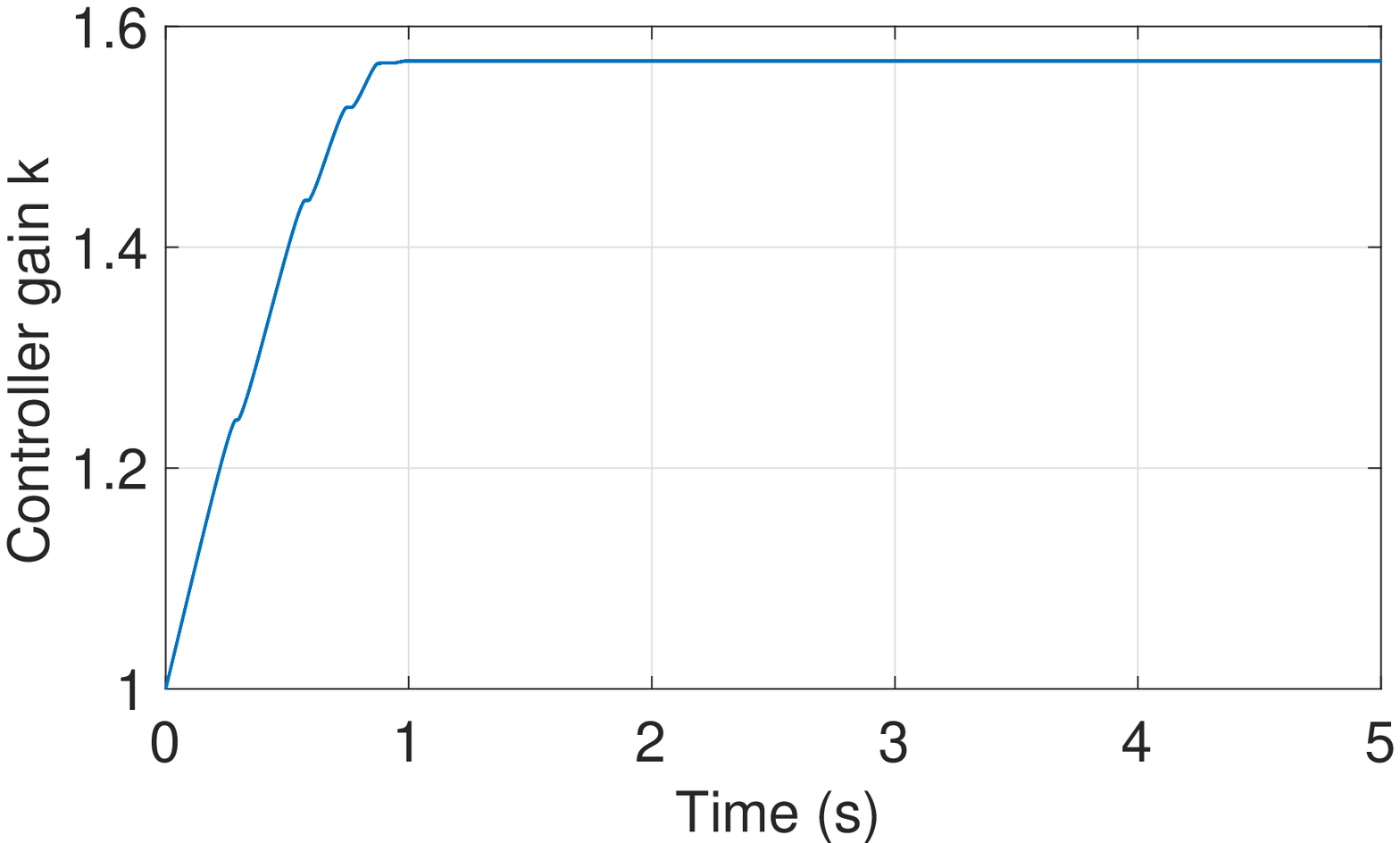}
\label{fig_x1x2k}
}
\subfigure[ ]{
\includegraphics[width=0.47\textwidth]{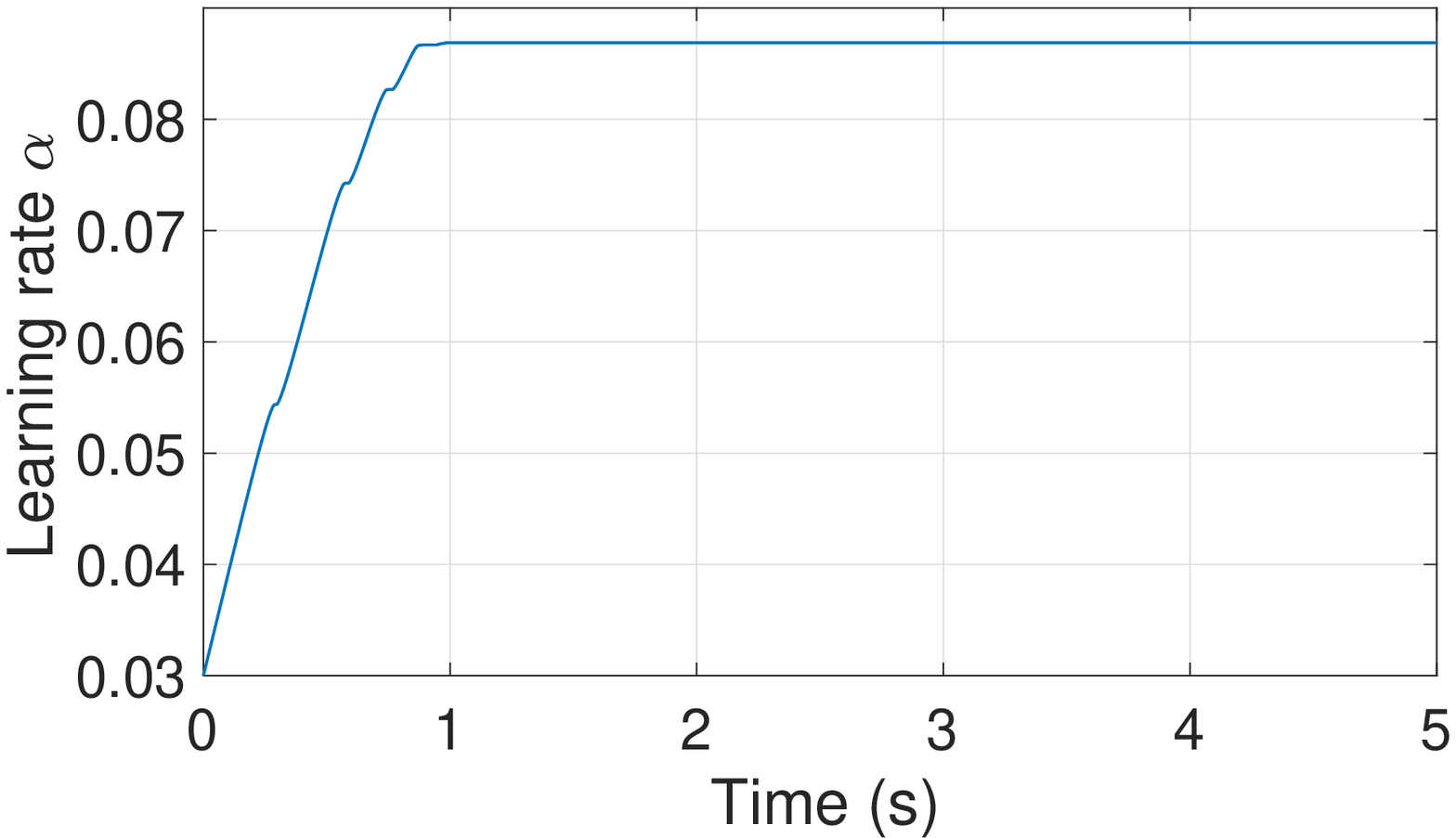}
\label{fig_x1x2alpha}
}
\subfigure[ ]{
\includegraphics[width=0.47\textwidth]{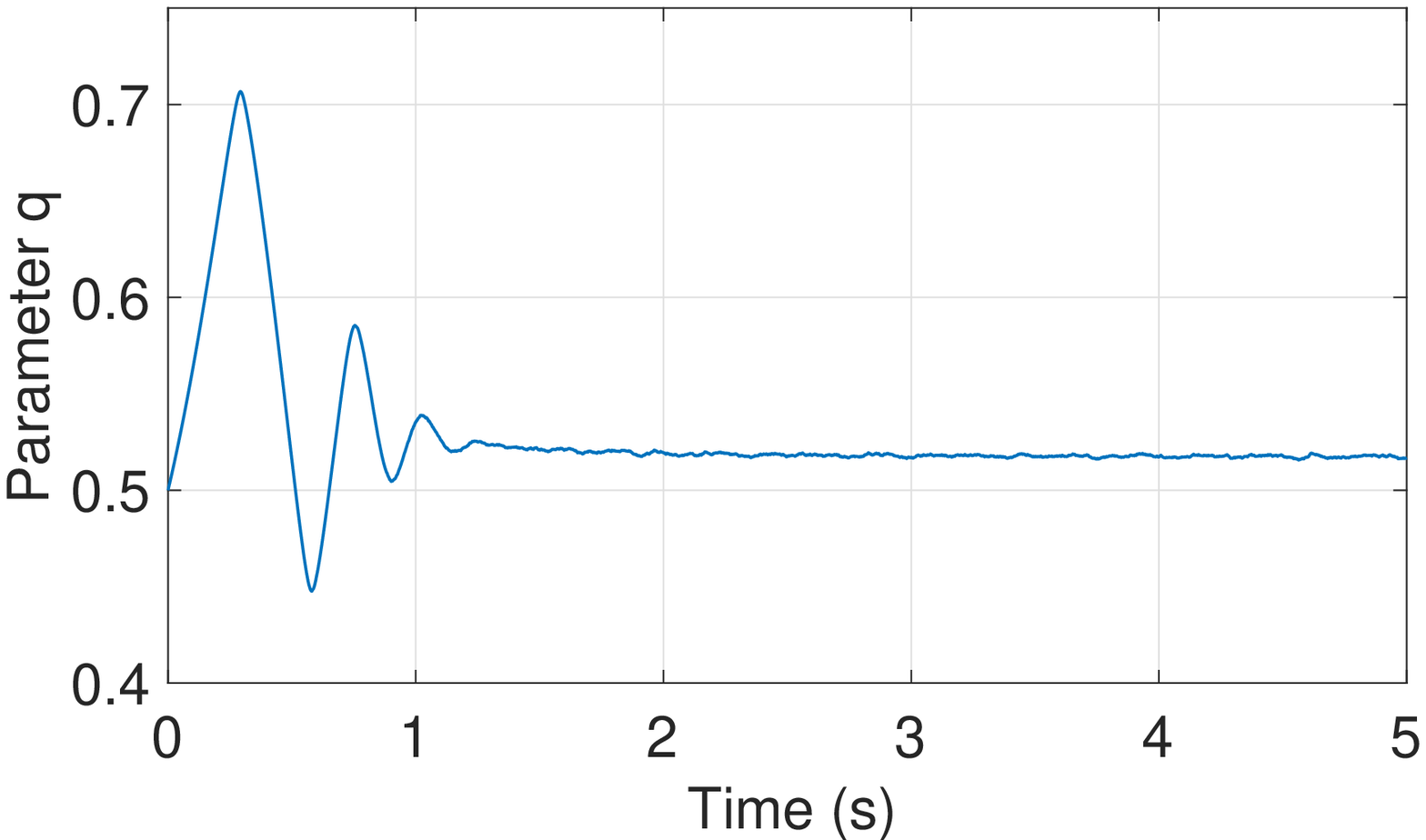}
\label{fig_x1x2q}
}
\caption[Optional caption for list of figures]{Control performance under noisy measurements: (a) State $x_{1}$ (b) State $x_{2}$ (c) Control signals (d) Adaptation of the controller gain (e) Adaptation of the learning rate (f) Adaptation of the parameter q}
\label{sensors}
\end{figure*}

\section{Conclusion}\label{section_conc}

The developed SMLC algorithm for uncertain nonlinear systems has been investigated in this paper. In addition to the stability of the training algorithm, the analytical proof of the overall system stability has been achieved for nth-order uncertain nonlinear systems. The simulation results show that the developed control structure exhibits robust control performance in the presence of external disturbances and noisy measurements. The parameters of the T2NFC controller are automatically regulated through the sliding mode learning algorithm so that T2NFC can learn the system behavior and take the overall control of systems in a very short time period. The proposed sliding mode learning algorithm is valid for only single-input systems. As a future work, the extension of the proposed learning algorithm for multi-input systems is  an interesting topic to be investigated.

\appendix
\section{The time-derivative of the output of the T2NFC $\dot{u}_{n}$}

The following equations are obtained by taking the time derivative of \eqref{mu1_lower}-\eqref{mu2_upper}:
\begin{eqnarray}
\dot{\underline{\mu}}_{1i}(x_1) = -2 \underline{N}_{1i} \dot{\underline{N}}_{1i}\underline{\mu}_{1i}(x_1), \quad 
\dot{\overline{\mu}}_{1i}(x_1) = -2 \overline{N}_{1i} \dot{\overline{N}}_{1i}\overline{\mu}_{1i}(x_1)
\end{eqnarray}
\begin{eqnarray}
\dot{\underline{\mu}}_{2j}(x_2) = -2 \underline{N}_{2j} \dot{\underline{N}}_{2j}\underline{\mu}_{2j}(x_2), \quad
\dot{\overline{\mu}}_{2j}(x_2) = -2 \overline{N}_{2j} \dot{\overline{N}}_{2j}\overline{\mu}_{2j}(x_2)
\end{eqnarray}
where $\underline{N}_{1i}=\Big(\frac{x_1-\underline{c}_{1i}}{\underline{\sigma}_{1i}}\Big)$,
$\overline{N}_{1i}=\Big(\frac{x_1-\overline{c}_{1i}}{\overline{\sigma}_{1i}}\Big)$, $\underline{N}_{2j}=\Big(\frac{x_2-\underline{c}_{2j}}{\underline{\sigma}_{2j}}\Big)$, and
$\overline{N}_{2j}=\Big(\frac{x_2-\overline{c}_{2j}}{\overline{\sigma}_{2j}}\Big)$.

By taking the time derivative of \eqref{wij_lower_upper_normalized}, the following equations are obtained as follows:
\begin{eqnarray}
\dot{\widetilde{\underline{w}}}_{ij}  &=&  \frac{\Big(\underline{\mu}_{1i}(x_1) \underline{\mu}_{2j}(x_2)\Big)'\Big(\sum_{i=1}^{I}\sum_{j=1}^{J}\underline{w}_{ij}\Big)}{(\sum_{i=1}^{I}\sum_{j=1}^{J}\underline{w}_{ij})^2} - \frac{\big(\underline{w}_{ij}\big)\Big(\sum_{i=1}^{I}\sum_{j=1}^{J}\underline{\mu}_{1i}(x_1)   \underline{\mu}_{2j}(x_2)\Big)'}{(\sum_{i=1}^{I}\sum_{j=1}^{J}\underline{w}_{ij})^2} \nonumber \\
\dot{\widetilde{\overline{w}}}_{ij}  &=&  \frac{\Big(\overline{\mu}_{1i}(x_1) \overline{\mu}_{2j}(x_2)\Big)'\Big(\sum_{i=1}^{I}\sum_{j=1}^{J}\overline{w}_{ij}\Big)}{(\sum_{i=1}^{I}\sum_{j=1}^{J}\overline{w}_{ij})^2}  - \frac{\big(\overline{w}_{ij}\big)\Big(\sum_{i=1}^{I}\sum_{j=1}^{J}\overline{\mu}_{1i}(x_1)   \overline{\mu}_{2j}(x_2)\Big)'}{(\sum_{i=1}^{I}\sum_{j=1}^{J}\overline{w}_{ij})^2}
\end{eqnarray}

Since $\widetilde{\underline{w}}_{ij} = \frac{\underline{w}_{ij}}{\sum_{i=1}^{I}\sum_{j=1}^{J}\underline{w}_{ij}}$,
\begin{eqnarray}\label{dotwij_lower_normalized}
\dot{\widetilde{\underline{w}}}_{ij} & = & \frac{\Big(\dot{\underline{\mu}}_{1i}(x_1) \underline{\mu}_{2j}(x_2)+ \underline{\mu}_{1i}(x_1) \dot{\underline{\mu}}_{2j}(x_2)\Big)}{(\sum_{i=1}^{I}\sum_{j=1}^{J}\underline{w}_{ij})}  -\frac{(\widetilde{\underline{w}}_{ij})\bigg(\sum_{i=1}^{I}\sum_{j=1}^{J} \Big(\dot{\underline{\mu}}_{1i}(x_1) \underline{\mu}_{2j}(x_2)+ \underline{\mu}_{1i}(x_1) \dot{\underline{\mu}}_{2j}(x_2)\Big)\bigg)}{(\sum_{i=1}^{I}\sum_{j=1}^{J}\underline{w}_{ij})} \nonumber \\
& = &  \frac{\Big(-2 \underline{N}_{1i} \dot{\underline{N}}_{1i} \underline{\mu}_{1i}(x_1)\underline{\mu}_{2j}(x_2) -2 \underline{N}_{2j} \dot{\underline{N}}_{2j} \underline{\mu}_{1i}(x_1) \underline{\mu}_{2j}(x_2)\Big)}{(\sum_{i=1}^{I}\sum_{j=1}^{J}\underline{w}_{ij})}   -\frac{(\widetilde{\underline{w}}_{ij})\sum_{i=1}^{I}\sum_{j=1}^{J} \Big(-2 \underline{N}_{1i} \dot{\underline{N}}_{1i} \underline{\mu}_{1i}(x_1)\underline{\mu}_{2j}(x_2) \Big)}{(\sum_{i=1}^{I}\sum_{j=1}^{J}\underline{w}_{ij})} \nonumber \\
& &-\frac{(\widetilde{\underline{w}}_{ij})\sum_{i=1}^{I}\sum_{j=1}^{J} \Big(-2 \underline{N}_{2j} \dot{\underline{N}}_{2j} \underline{\mu}_{1i}(x_1) \underline{\mu}_{2j}(x_2)\Big)}{(\sum_{i=1}^{I}\sum_{j=1}^{J}\underline{w}_{ij})} \nonumber \\
&=&  -\widetilde{\underline{w}}_{ij}\dot{\underline{K}}_{ij} + \widetilde{\underline{w}}_{ij} \sum_{i=1}^{I}\sum_{j=1}^{J}\widetilde{\underline{w}}_{ij} \dot{\underline{K}}_{ij}
\end{eqnarray}
where
\begin{equation*}
\dot{\underline{K}}_{ij} = 2 \Big(\underline{N}_{1i} \dot{\underline{N}}_{1i} +\underline{N}_{2j} \dot{\underline{N}_{2j}} \Big)
\end{equation*}

In a similar way, $\dot{\widetilde{\overline{w}}}_{ij} $ obtained:
\begin{equation}\label{dotwij_upper_normalized}
\dot{\widetilde{\overline{w}}}_{ij} = -\widetilde{\overline{w}}_{ij}\dot{\overline{K}}_{ij} + \widetilde{\overline{w}}_{ij} \sum_{i=1}^{I}\sum_{j=1}^{J}\widetilde{\overline{w}}_{ij}  \dot{\overline{K}}_{ij} 
\end{equation}
where
\begin{equation*}
\dot{\overline{K}}_{ij}  = 2 \Big(\overline{N}_{1i} \dot{\overline{N}}_{1i} +\overline{N}_{2j} \dot{\overline{N}}_{2j} \Big)
\end{equation*}

The time derivative of the output of the T2NFC algorithm is obtained as:
\begin{eqnarray}
\dot{u}_n & = & q\sum_{i=1}^{I}\sum_{j=1}^{J}(\dot{f}_{ij} \widetilde{\underline{w}}_{ij}+ f_{ij}\dot{\widetilde{\underline{w}}}_{ij})  +(1-q)\sum_{i=1}^{I}\sum_{j=1}^{J}(\dot{f}_{ij}\widetilde{\overline{w}}_{ij} + f_{ij}\dot{\widetilde{\overline{w}}}_{ij}) +\dot{q} \sum_{i=1}^{I}\sum_{j=1}^{J}f_{ij}\widetilde{\underline{w}}_{ij}
-\dot{q} \sum_{i=1}^{I}\sum_{j=1}^{J}f_{ij}\widetilde{\overline{w}}_{ij}
\end{eqnarray}

If \eqref{dotwij_lower_normalized} and \eqref{dotwij_upper_normalized} are inserted to the aforementioned equation:
\begin{eqnarray}\label{dotVc2}
\dot{u}_n & = & q\sum_{i=1}^{I}\sum_{j=1}^{J}\bigg(\Big(-\widetilde{\underline{w}}_{ij} \dot{\underline{K}}_{ij}+\widetilde{\underline{w}}_{ij} \sum_{i=1}^{I}\sum_{j=1}^{J}\widetilde{\underline{w}}_{ij} \dot{\underline{K}}_{ij} \Big)f_{ij} +\widetilde{\underline{w}}_{ij}\dot{f}_{ij}\bigg) \nonumber \\
&&+(1-q)\sum_{i=1}^{I}\sum_{j=1}^{J}\bigg(\Big( -\widetilde{\overline{w}}_{ij} \dot{\overline{K}}_{ij} + \widetilde{\overline{w}}_{ij} \sum_{i=1}^{I}\sum_{j=1}^{J}\widetilde{\overline{w}}_{ij} \dot{\overline{K}}_{ij} \Big)f_{ij} +\widetilde{\overline{w}}_{ij}\dot{f}_{ij}\bigg) +\dot{q} \sum_{i=1}^{I}\sum_{j=1}^{J}f_{ij} (\widetilde{\underline{w}}_{ij} - \widetilde{\overline{w}}_{ij} )  \nonumber \\
& = &q\sum_{i=1}^{I}\sum_{j=1}^{J}\bigg(\Big(-\widetilde{\underline{w}}_{ij}\big(2\underline{N}_{1i} \dot{\underline{N}}_{1i} +2\underline{N}_{2j} \dot{\underline{N}}_{2j} \big)  + \widetilde{\underline{w}}_{ij} \sum_{i=1}^{I}\sum_{j=1}^{J}\widetilde{\underline{w}}_{ij} \big(2\underline{N}_{1i} \dot{\underline{N}}_{1i} +2\underline{N}_{2j} \dot{\underline{N}}_{2j} \big)\Big) f_{ij}+\widetilde{\underline{w}}_{ij}\dot{f}_{ij}\bigg) \nonumber \\
&& +(1-q)\sum_{i=1}^{I}\sum_{j=1}^{J}\bigg(\Big( -\widetilde{\overline{w}}_{ij}\big(2 \overline{N}_{1i} \dot{\overline{N}}_{1i} +2 \overline{N}_{2j} \dot{\overline{N}}_{2j} \big)  + \widetilde{\overline{w}}_{ij} \sum_{i=1}^{I}\sum_{j=1}^{J}\widetilde{\overline{w}}_{ij} \big(2 \overline{N}_{1i} \dot{\overline{N}}_{1i} +2 \overline{N}_{2j} \dot{\overline{N}}_{2j} \big)\Big)f_{ij} +\widetilde{\overline{w}}_{ij}\dot{f}_{ij}\bigg) \nonumber \\
&& +\dot{q} \sum_{i=1}^{I}\sum_{j=1}^{J}f_{ij} (\widetilde{\underline{w}}_{ij} - \widetilde{\overline{w}}_{ij} )
\end{eqnarray}
where
\begin{eqnarray*}
\dot{\underline{N}}_{1i}= \frac{(\dot{x_1} - \dot{\underline{c}}_{{1i}})\underline{\sigma}_{1i}-(x_1 - \underline{c}_{1i})\dot{\underline{\sigma}}_{1i}}{\underline{\sigma}^2 _{1i}}, \quad
\dot{\underline{N}}_{2i}= \frac{(\dot{x_2} - \dot{\underline{c}}_{{2i}})\underline{\sigma}_{2i}-(x_2 - \underline{c}_{2i})\dot{\underline{\sigma}}_{2i}}{\underline{\sigma}^2 _{2i}}
\end{eqnarray*}
\begin{eqnarray*}
\dot{\overline{N}}_{1i}= \frac{(\dot{x_1} - \dot{\overline{c}}_{{1i}})\overline{\sigma}_{1i}-(x_1 - \overline{c}_{1i})\dot{\overline{\sigma}}_{1i}}{\overline{\sigma}^2 _{1i}}, \quad
\dot{\overline{N}}_{2i}= \frac{(\dot{x_2} - \dot{\overline{c}}_{{2i}})\overline{\sigma}_{2i}-(x_2 - \overline{c}_{2i})\dot{\overline{\sigma}}_{2i}}{\overline{\sigma}^2 _{2i}}
\end{eqnarray*}

\eqref{up8} can be obtained as follows:
\begin{equation}\label{up8}
\underline{N}_{1i}\dot{\underline{N}}_{1i}=\underline{N}_{2i}\dot{\underline{N}}_{2i}=\overline{N}_{1i}\dot{\overline{N}}_{1i}=\overline{N}_{2i}\dot{\overline{N}}_{2i}= \alpha sgn{(s)}
\end{equation}

\eqref{up8} is inserted to \eqref{dotVc2}:
\begin{eqnarray}\label{dotVc3}
\dot{u}_n & = &  q \sum_{i=1}^{I}\sum_{j=1}^{J}\bigg(2\Big(-\widetilde{\underline{w}}_{ij} f_{ij} 2 \alpha sgn{(s)}  + \widetilde{\underline{w}}_{ij} f_{ij} \sum_{i=1}^{I}\sum_{j=1}^{J}\widetilde{\underline{w}}_{ij} 2 \alpha sgn{(s)}\Big)+\widetilde{\underline{w}}_{ij} \dot{f}_{ij} \bigg) \nonumber \\
&& +(1-q)\sum_{i=1}^{I}\sum_{j=1}^{J}\bigg(2\Big(-\widetilde{\overline{w}}_{ij} f_{ij} 2 \alpha sgn{(s)}   +\widetilde{\overline{w}}_{ij} f_{ij} \sum_{i=1}^{I}\sum_{j=1}^{J}\widetilde{\overline{w}}_{ij} 2\alpha sgn{(s)}\Big)\widetilde{\overline{w}}_{ij} \dot{f}_{ij} \bigg)  +\dot{q} \sum_{i=1}^{I}\sum_{j=1}^{J}f_{ij} (\widetilde{\underline{w}}_{ij} - \widetilde{\overline{w}}_{ij} )  \nonumber \\
& = & -4q\alpha sgn{(s)} \sum_{i=1}^{I}\sum_{j=1}^{J}\Big(\widetilde{\underline{w}}_{ij} f_{ij} - \widetilde{\underline{w}}_{ij} f_{ij} \sum_{i=1}^{I}\sum_{j=1}^{J}\widetilde{\underline{w}}_{ij} \Big)  -4 (1-q) \alpha sgn{(s)}\sum_{i=1}^{I}\sum_{j=1}^{J}\Big(\widetilde{\overline{w}}_{ij} f_{ij}  -\widetilde{\overline{w}}_{ij} f_{ij}\sum_{i=1}^{I}\sum_{j=1}^{J}\widetilde{\overline{w}}_{ij} \Big)\nonumber \\
&& +q \sum_{i=1}^{I}\sum_{j=1}^{J}\widetilde{\underline{w}}_{ij} \dot{f}_{ij}  +(1-q)\sum_{i=1}^{I}\sum_{j=1}^{J}\widetilde{\overline{w}}_{ij} \dot{f}_{ij}  +\dot{q} \sum_{i=1}^{I}\sum_{j=1}^{J}f_{ij} (\widetilde{\underline{w}}_{ij} - \widetilde{\overline{w}}_{ij} ) 
\end{eqnarray}

Since $\sum_{i=1}^{I}\sum_{j=1}^{J}\widetilde{\overline{w}}_{ij} =1$ and $\sum_{i=1}^{I}\sum_{j=1}^{J}\widetilde{\underline{w}}_{ij} =1$,  $\dot{u}_{n}$ becomes by using \eqref{f_ij} and \eqref{q} as below:
\begin{eqnarray}\label{dotVc4}
\dot{u}_n & = & q \sum_{i=1}^{I}\sum_{j=1}^{J}\widetilde{\underline{w}}_{ij} \dot{f}_{ij} + (1-q)\sum_{i=1}^{I}\sum_{j=1}^{J}\widetilde{\overline{w}}_{ij} \dot{f}_{ij} +\dot{q} \sum_{i=1}^{I}\sum_{j=1}^{J}f_{ij} (\widetilde{\underline{w}}_{ij} - \widetilde{\overline{w}}_{ij} )  \nonumber \\
& = & 2 \alpha sgn{(s)}
\end{eqnarray}

\bibliographystyle{SageH} 
\bibliography{reference.bib}

\end{document}